\def\l@subsubsection#1#2{}
\def\l@subsection#1#2{}
\newtcolorbox[blend into=figures]{boxfigure}[2][]
{ float*=htb,width=\textwidth,lower separated=false, center upper, 
center title,title={#2},every float=\centering,#1}
\declaretheoremstyle[shaded={rulecolor=MidnightBlue,rulewidth=1pt, bgcolor={rgb}{1,1,1}}]{boxed}
\declaretheoremstyle[shaded={rulecolor=Thistle,rulewidth=1pt, bgcolor={rgb}{1,1,1}}]{secboxed}
\declaretheoremstyle[shaded={rulecolor=YellowOrange,rulewidth=1pt, bgcolor={rgb}{1,1,1}}]{terboxed}
\declaretheoremstyle[shaded={rulecolor=Green,rulewidth=1pt, bgcolor={rgb}{1,1,1}}]{tetraboxed}
\declaretheorem[]{lemma}
\declaretheorem[sibling=lemma, style=tetraboxed]{corollary}
\declaretheorem[sibling=lemma, style=tetraboxed]{proposition}
\declaretheorem[style=secboxed]{definition}
	\newcommand{\blue}[1]{\textcolor{Blue}{#1}}
	\newcommand{\green}[1]{\textcolor{OliveGreen}{#1}}
	\newcommand{\orange}[1]{\textcolor{Orange}{#1}}
	\newcommand{\flag}[1]{\green{ [#1]}}
    \newcommand{\ket}[1]{\vert  #1 \rangle}
	\newcommand{\hilbert}{\mathcal{H}}
    \newcommand{\com}[1]{ \overline{#1}\, }
	\newcommand{\ball}{\mathcal{B}}
	\newcommand{\tr}{\operatorname{Tr}  }
\newcommand*{\E}{\mathcal{E}}
\newcommand{\cT}{\mathcal{T}}
\newcommand*{\h}{{\bf h}}
	\newcommand*{\eps}{\varepsilon}
\begin{document}

\title{Operational locality in global theories}

\author{Lea Kr\"amer}
\affiliation{Institute for Theoretical Physics, ETH Zurich, Switzerland}

\author{L\'idia del Rio}
\affiliation{Institute for Theoretical Physics, ETH Zurich, Switzerland}
\affiliation{School of Physics, University of Bristol, United Kingdom}
\email{delrio@phys.ethz.ch}

\date{}

\maketitle

\begin{abstract}
Within a global physical theory, a notion of locality allows us to find and justify information-processing primitives, like non-signalling between distant agents. 
Here we propose exploring the opposite direction: to take agents as the basic building blocks through which we test a physical theory, and recover operational notions of locality from signalling conditions. 
First we introduce an operational model for the  effective state spaces of individual agents, as well as the range of their actions.  We then formulate natural secrecy conditions between agents  and identify the aspects of locality relevant for signalling. We discuss the possibility of taking commutation of transformations as a primitive of physical theories, as well as applications to quantum theory and generalized probability frameworks.
This ``it from bit'' approach establishes an operational connection between local action and local observations, and gives a global interpretation to concepts like discarding a subsystem or composing local functions. 
\end{abstract}

\begin{figure*}[t]
    \centering
    
    \includegraphics[width=0.85\textwidth]{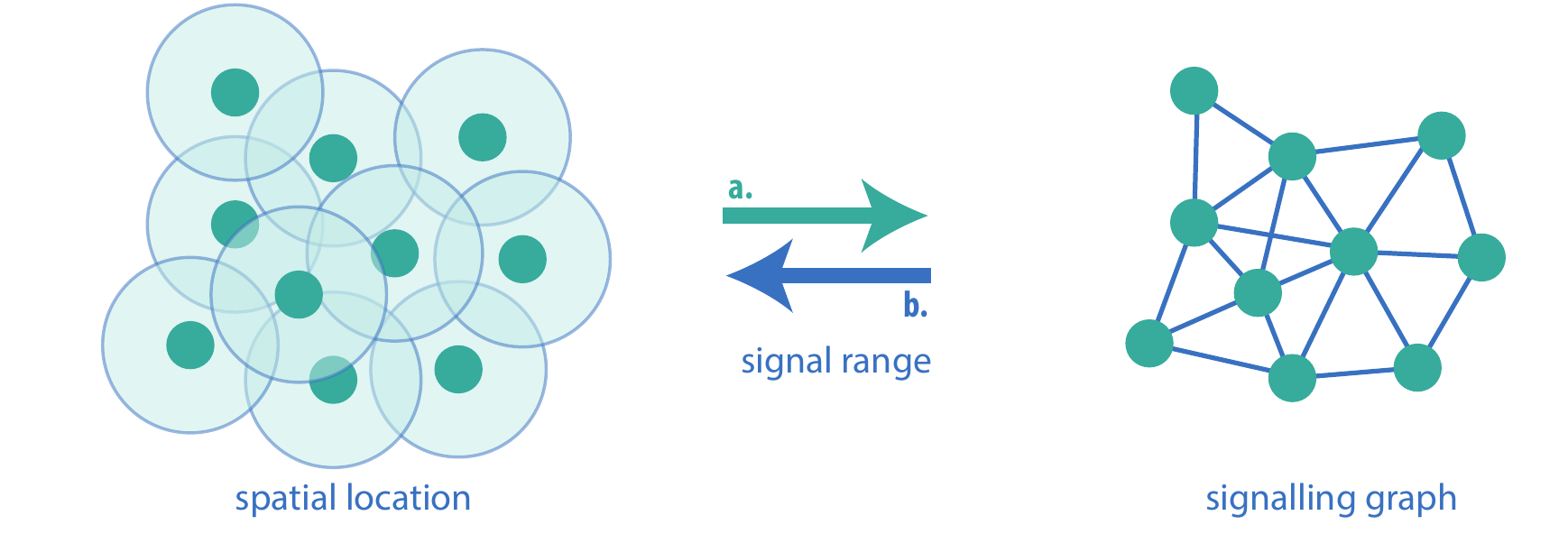}

    \caption{{\bf There and back again: physical locality and signalling.} On the left, the spatial location of several agents (dark dots) and their range of communication (overlapping circles) are depicted; on the right, the corresponding signalling graph.
    {\bf a.}  In physical theories, the notion of a space-time background where agents are positioned, together with principles about the range of signalling (e.g.\ the finite speed of light) allow us to derive information-processing concepts like non-signalling agents.  {\bf b.} Reverse direction: starting from the notion of agents that may or not be able to communicate and minimal assumptions on the nature and range of signalling, it may be possible to deduce both the space-time structure of the theory and the position of agents in it, to a good approximation.
    We can take inspiration from a simple example in the field of localization in wireless sensor networks  (for a review see e.g.\ Ref.~\cite{Cheng2012}). Forest fire prevention mechanisms can be implemented by dropping a large number of smoke-detecting sensors from a plane over the forest. The sensors (our agents) are equipped with short-range communication systems, and land at random positions. One then collects the data of which sensors can signal to each other. From the signalling graph, it is possible to reconstruct the relative positions of the sensors on the ground to high accuracy --- that way, when the smoke alarm goes off in a sensor, the fire-response team can quickly locate it.
}

    \label{fig:detectors}
\end{figure*}

In modelling local agents acting within a global theory, the intuitive assumption is that both their actions and their knowledge are restricted to a bounded region. The canonical example is a scientist who has full control of her lab and can perform local tomography.
In reality though, the breadth of knowledge and the range of action of agents may be decoupled. For example, prisoners can acquire global knowledge by reading the news, but their actions are limited to small subsystems. Conversely, someone locked in a control room may only have local knowledge of the shapes of different buttons, but  pressing one may have global consequences. 
The observation that the knowledge and action do not always go hand in hand implies that in order to model agents we have to specify both (Section~\ref{sec:local_agents}). 
This naturally leads us to search for minimal operational constraints needed to ensure that agents are truly local.

Here we  motivate a notion of \emph{secrecy} between agents, which captures whether actions performed by an agent (like writing a message, choosing a bit or preparing a quantum state) can be perceived by another (Section~\ref{sec:non-signalling}); traditional notions of non-signalling correspond to an extended secrecy between space-like separated regions  (Section~\ref{sec:discussion}). 
This work brings together and clarifies concepts of locality used in quantum theory, generalized probabilistic theories and field theories.  It highlights that the \emph{state space} and \emph{transformations} of a theory are but a  subjective choice of representation of the underlying physical theory from a  viewpoint that is convenient to a given agent, as argued by Spekkens~\cite{Spekkens2012}. Here, we tentatively suggest commutation of transformations as a primitive  of physical theories. In particular, we show how 
to derive local agents (and effective descriptions of local subsystems) from commutation relations on global transformations  (Section~\ref{sec:commutativity}). 

This work draws from our ``Resource theories of knowledge'' \cite{DelRio2015}, and has natural applications in multi-player settings, like cryptographic scenarios, games or resource theories. 
There is yet a more exciting possible application: to recover the space-time structure of a physical theory from the primitive notion of test agents, in the spirit of Hardy's operational GR~\cite{Hardy2016} and to the task of localization in wireless sensor networks~\cite{Cheng2012}. The idea is to send out agents (or probes) to unknown positions, see if they can communicate with each other, and use the signalling graph to define distances between agents,  reconstruct their relative positions, and infer properties of space-time
(Figure~\ref{fig:detectors}). 
For this we must first find appropriate, theory-independent notions of  agents and signalling.

\section{Modelling agents}
\label{sec:local_agents}

We start with a top-down approach, where we first describe a global theory (as seen by a global agent), and then model restricted agents acting within that theory.

\subsection{Global theory}

From the point of view of a given global agent, a global theory may be represented  via a state space $\Omega$ and a set of  transformations $\cT$ that are available to the agent
\cite{Coecke2014, Fritz2015,Coecke2014b,Coecke2011}.
We can think of the state space as the ``language'' chosen by this global observer to describe nature. For example, $\Omega$ could be the set of coordinates and momenta of all celestial bodies; in quantum theory, it could be the set of valid density matrices over a global Hilbert space. It need not be a static picture: in astronomy, an alternative state space $\Omega'$ could  be the set of possible trajectories of celestial bodies, and in quantum theory it could include all  global Hamiltonians that determine the free evolution of density matrices.
Note that
{\bf 1)} $\Omega$ is not the ultimate description of reality, just a convenient representation from the point of view of a global agent;
{\bf 2)} different pictures, like  $\Omega$ and  $\Omega'$, may be related and mapped to one another \cite{DelRio2015,Spekkens2012}; and
{\bf 3)} $\Omega$ need not  have any special structure \emph{a priori} besides being a set --- indeed, the approach laid out here will allow us to find an operational subsystem structure in the set of states.

The transformations in $\cT$ represent all actions that the theory allows the global agent to implement. 
We can think of them as the ways in which the agent may test a theory, by applying actions that change state parameters. 
For example, an explicit theory of a quantum universe may allow only for unitary operations, while a more generous theory could equip the agent with implicit large ancillas, and allow her to implement general quantum channels, state preparations and even tomography. Again the two views can be related: the latter is an \emph{effective theory} derived from the unitary quantum theory, by internalizing part of the global space as belonging to the agent and her instruments, and not to the object of study (the rest of the universe) \cite{DelRio2015}.  In the context of field theories, this is discussed as  \emph{emerging agency}  \cite{Hardy2016}. 
In a superdeterministic theory, there is only one possible course of evolution for the universe, and  $\cT$ consists only of functions that apply it (for example $\cT \cong \{e^{-i H t} \}_t$ where the global agent is given some choice of time).
Formally, $\cT$ is a monoid of functions $f: \Omega \to \Omega$: it contains the identity transformation and is closed under concatenation (an associative binary operation), such that performing two actions subsequently, $f \circ g$, is still an allowed operation. We discuss the monoidal assumption and possible relaxations in Section~\ref{sec:discussion}.

\subsection{Local agents}

\begin{figure*}[t]
    \centering
    
    \includegraphics[width=\textwidth]{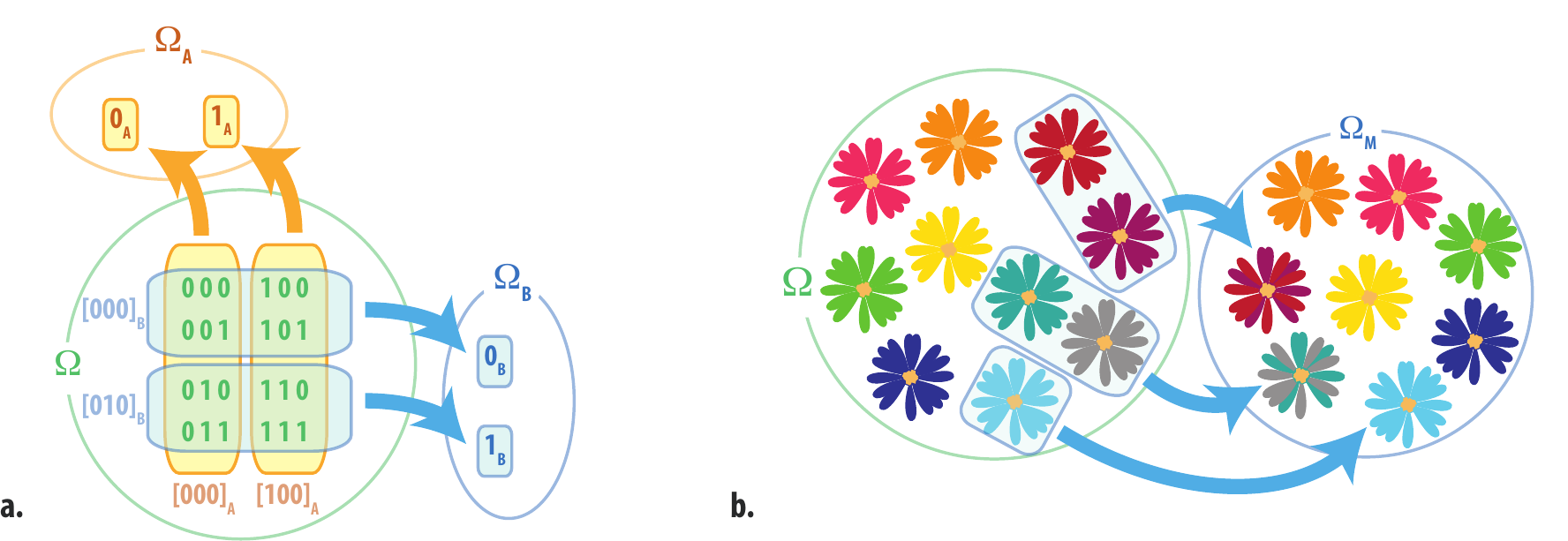}

    \caption{{\bf Building an agent's effective state space.} The different states of a global space $\Omega$ are shown to an agent, who\  finds equivalence classes of (subjectively) indistinguishable states. Their effective state space is then the quotient space. {\bf a.} The global state space $\Omega$ consists of three bits, in the eight possible states depicted. An agent Alice can only see the first bit, therefore she cannot distinguish the states in each vertical box (her equivalence classes $[000]_A$ and $[100]_A$). Her effective state space $\Omega_A = \Omega/\!\sim_A$ has only two states, which can be relabelled as $0_A$ and $1_A$ for convenience. Another agent Bob identifies the equivalent classes $[000]_B$ and $[010]_B$, which leads us to conclude that he can only see the second bit. Note that for example if Alice were able to apply transformations that only change the first bit, she could not signal to Bob (because he could not detect the change).
    {\bf b.} Here
    $\Omega$ is the space of colours, which were shown to a partly colourblind agent Marvin. Marvin identified the colours that he could not distinguish, which allowed us to build his reduced state space of colours $\Omega_M$. 
}

    \label{fig:quotient}
\end{figure*}


Local agents are characterized by limited knowledge: their inability to distinguish global states that appear identical in their eyes. We can formalize this by building equivalence classes of states that are indistinguishable from the perspective of an agent. 
For example, in quantum theory, we could have an agent Bob who only has access to a Hilbert space $\hilbert_B$; two global states are indistinguishable (or equivalent) from Bob's perspective if they have the same marginal in  $\hilbert_B$. This defines an equivalence relation $\sigma \sim_B \rho: \tr_{\com B} \sigma = \tr_{\com B} \rho $, where $\tr_{\com B}$ denotes a partial trace over all systems except $B$. 
The corresponding equivalence classes are
\begin{align*}
    [\rho]_B := \{\sigma \in \Omega: \tr_{\com B} \sigma =  \rho_B \}.
\end{align*}
Taking the quotient over this equivalence relation gives us a new space state $ \Omega /\! {\sim_B}  $, which is in one-to-one correspondence with  the set  of all reduced density matrices in $\hilbert_B$. This is Bob's \emph{effective state space}, sufficient to encode all the information that he can observe about any global state (Figure~\ref{fig:quotient}). 
In this case, the map from the global to the local spaces (the \emph{canonical map}) is given by the partial trace: 
\begin{align*}
    \h_B:  \Omega &\to \Omega /\! {\sim_B}  \\
    \rho &\mapsto  [\rho]_{B} \quad  \cong  \tr_{\com B} \rho = \rho_B.
\end{align*}
More generally, we can always build the effective state space of an agent in this way, even if we do not know anything about the structure of the global space (for instance whether it can be split into a convenient tensor form $\hilbert_A \otimes \hilbert_B$). The construction of an agent's effective state space $\Omega_B := \Omega /\! {\sim_B}$ is in the spirit of Leibniz principle of identity of indiscernibles  \cite{leibniz1739}. Yet this operational procedure emphasizes that both discernibility and identity are subjective concepts (Figure~\ref{fig:quotient}).
Limitations on Bob's perspective may have nothing to do with spatial locality. Bob might only have access to crude measurement instruments unable to distinguish microscopic details of states, or he may not be able to distinguish a global phase or gauge \cite{Hardy2016}. In generalized probability frameworks, Bob's perspective can correspond to a grouping of individual global outcomes into events  (Appendix~\ref{appendix:GPT}).  In algebraic quantum field theory, these equivalence classes could emerge from algebras of local observables (see e.g.\ Ref.~\cite{Valente2013} for a review).

The other ingredient needed to define an agent, as we saw in the introduction, is a description of the  actions available to him. As his actions may have a global impact, a minimal approach is to take them to be a submonoid $\cT_B\subseteq\cT$ of the globally allowed transformations. We discuss relaxations of this definition in Section~\ref{sec:discussion}. 
Generalizations of this approach can be found in Ref.~\cite{DelRio2015}. There, we also study explicit ways to move between global and local views (technically, related by \emph{Galois insertions}), effective theories and other properties of local agents. 

\begin{definition}[Global theory and restricted agents]
A \emph{global theory of agents} is defined by a pair 
 $(\Omega,\cT)$, where $\Omega$  (the \emph{state space}) is a set, and $\cT$ is a monoid of transformations $f: \Omega \to \Omega$, with the concatenation operation $\circ$. 

A \emph{restricted agent} $B$ acting within the theory is defined by a pair $(\sim_B, \cT_B)$ , where $\sim_B$ is an equivalence relation in $\Omega$ and $\cT_B$ is a submonoid of $\cT$ called the set of \emph{local operations} of the agent. The quotient space
$\Omega_B:= \Omega/\!\sim_B$ is called the \emph{effective space} of agent $B$. 
The \emph{reduction} to the effective space is given by the canonical map 
\begin{align*}
    \h_B: \Omega &\to \Omega_B \\
         \rho &\mapsto [\rho]_B .
\end{align*}
\end{definition}

We can always further coarse-grain the effective state space of a given agent $B$ in order to obtain a more restricted agent $C$. For example,  in renormalization group flow, lowering the cutoff corresponds to coarse-graining over more and more observables \cite{Wilson1974,Polchinski1984}. The following proposition formalizes this idea \cite[Prop.\ III.5]{DelRio2015}. All proofs can be found in Appendix~\ref{appendix:proofs}.

\begin{restatable}[Nested agents]{proposition}{PropNestedAgents}
\label{prop:nested_agents}
Let $(\Omega, \cT)$ be a global theory, and $B$, $C$ two restricted agents. Then the following are equivalent: 
\begin{enumerate}
    \item $C$ has more restricted knowledge than $B$, that is $[\rho]_B \subseteq [\rho]_{C}, \quad \forall \ \rho \in \Omega $, 
    \item There exists an equivalence relation $ \sim_{B \to C}$ in $B$'s effective state space $\Omega_B$ such that  $\Omega_C \cong \Omega_B / \sim_{B \to C}$.
\end{enumerate}

\end{restatable}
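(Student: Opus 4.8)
The plan is to recognize this as the set-theoretic statement that, among equivalence relations on $\Omega$, condition 1 says precisely that $\sim_B$ \emph{refines} $\sim_C$, and that refinement of equivalence relations corresponds exactly to iterated quotients. The central tool throughout is the universal property of the quotient: a function on $\Omega$ is constant on $\sim_B$-classes if and only if it factors through the canonical map $\h_B$. The monoid structure $\cT$ plays no role; the proposition concerns only the knowledge component of an agent.

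First I would translate both hypotheses into statements about the canonical maps $\h_B: \Omega \to \Omega_B$ and $\h_C: \Omega \to \Omega_C$. Condition 1, namely $[\rho]_B \subseteq [\rho]_C$ for all $\rho$, is equivalent to the implication $\h_B(\rho) = \h_B(\sigma) \Rightarrow \h_C(\rho) = \h_C(\sigma)$, i.e.\ to the assertion that $\h_C$ is constant on each $\sim_B$-class. I would record this reformulation explicitly, since both directions hinge on it.

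For $1 \Rightarrow 2$: assuming $\h_C$ is constant on $\sim_B$-classes, the universal property yields a unique map $g: \Omega_B \to \Omega_C$ with $\h_C = g \circ \h_B$, and $g$ is surjective because $\h_C$ is. I would then define $\sim_{B\to C}$ on $\Omega_B$ as the kernel of $g$, i.e.\ $[\rho]_B \sim_{B\to C} [\sigma]_B :\iff g([\rho]_B) = g([\sigma]_B) \iff \rho \sim_C \sigma$. Well-definedness of this relation is exactly the refinement hypothesis, and reflexivity, symmetry and transitivity are inherited from $\sim_C$. The induced map $\bar g: \Omega_B /\!\sim_{B\to C}\, \to \Omega_C$ is then a bijection, injective by construction of the kernel and surjective since $g$ is, giving $\Omega_C \cong \Omega_B /\!\sim_{B\to C}$.

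For $2 \Rightarrow 1$: here I would take the isomorphism $\phi: \Omega_B /\!\sim_{B\to C}\, \to \Omega_C$ together with the canonical projection $q: \Omega_B \to \Omega_B /\!\sim_{B\to C}$ and observe that $\phi \circ q \circ \h_B = \h_C$, so $\h_C$ factors through $\h_B$, which returns condition 1. This is where the main obstacle lies: the equivalence is only meaningful if the $\cong$ in condition 2 is understood as an isomorphism \emph{compatible with the canonical maps}, that is $\phi \circ q \circ \h_B = \h_C$; otherwise any two effective spaces of equal cardinality would be trivially isomorphic and the implication would collapse. I would therefore make this compatibility part of the definition of $\Omega_C \cong \Omega_B /\!\sim_{B\to C}$, after which $2 \Rightarrow 1$ is immediate.
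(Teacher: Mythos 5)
Your proof is correct and follows essentially the same route as the paper: both define $\sim_{B \to C}$ by declaring $[\rho]_B \sim_{B\to C} [\sigma]_B$ iff $\rho \sim_C \sigma$, with well-definedness supplied exactly by the refinement hypothesis $[\rho]_B \subseteq [\rho]_C$, and both derive $2 \Rightarrow 1$ from the factorization of $\h_C$ through $\h_B$. Your explicit insistence that the $\cong$ in condition 2 be an isomorphism compatible with the canonical maps is precisely the convention the paper's proof of $2 \to 1$ adopts silently (``the reduction $\h_C$ is isomorphic to $\h_{B\to C} \circ \h_B$''), so on this point you have sharpened, rather than departed from, the paper's argument.
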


\section{Secrecy between agents}
\label{sec:non-signalling}

\subsection{Secrecy}

Having defined agents, we may study conditions for secrecy and non-signalling between them. 
Consider a setup of two agents Alice and Bob, represented by $A =(\sim_A,\cT_A)$ and $B =(\sim_B,\cT_B)$. 
Imagine that Alice wants to keep her actions (like writing a message or preparing a state) secret from Bob. This is achieved if  Bob cannot tell whether she applied them, even after post-processing.\footnote{Bob's effective space may  include his local processing  (``states that I can distinguish after applying all my accessible operations") or not (``states that I distinguish immediately, before further processing"). For the sake of generality, we leave the freedom in this decision up to the agent, and account for post-processing in the definition of secrecy.}

\begin{definition}[Secrecy] \label{def:secrecy}
We say that an agent $A$ has access to \emph{secret operations} $\cT_A^S \subseteq \cT_A$ towards another agent $B$ if
$$  f_B \circ g_A (\rho)\ \sim_B \ f_B (\rho) $$ 
for all $\rho \in \Omega,\ g_A \in \cT_A^S,\ f_B \in \cT_B$. 
If all actions in $\cT_A$ are secret towards $B$ and $\cT_B$ are secret towards $A$ we say that the two agents are \emph{mutually secret}. 
\end{definition}

We may ask if this definition is robust enough, that is, whether further pre- or post-processing by Alice and Bob could destroy the secrecy of a choice of action $g_A\in\cT_A^S$. 
The next proposition shows that no matter how many `secret' transformations in $\cT_A^S$  Alice implements, or how Bob acts in between to try and recover information, he will not detect any of the effects of Alice's actions.
In addition, it is easy to see that pre-processing with a global function (such as distributing entanglement between the two parties) cannot lift secrecy, since  Definition~\ref{def:secrecy} requires it to hold for all initial states.

\begin{restatable}[Robustness of secrecy]{proposition}{PropRobustnessSecrecy}  
\label{prop:secrecy}
If $A$ has secret operations $\cT_A^S$ with respect to $B$ (according to Definition~\ref{def:secrecy}), then pre- and post-processing cannot lift the secrecy, that is
\begin{align*}
 f_B^N \circ \green{g_A^N} \circ \dots  \circ  f_B^2 \circ \green{g_A^2}  \circ f_B^1 \circ \green{g_A^1} \circ f (\rho) \\
\sim_B \
f_B^N \circ \dots \circ  f_B^2  \circ f_B^1
\circ f (\rho), 
\end{align*}
for all states  $\rho \in \Omega$, secret operations $ \{\green{g_A^i}\}_i \subseteq \cT_A^S$ and $ \{f_B^i \}_i \subseteq \cT_B$, global operations $ f \in \cT$ and $N \in \mathbb N$.
\end{restatable}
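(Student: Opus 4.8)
The plan is to prove the statement by a finite chain of elementary secrecy steps, peeling off one secret operation $g_A^i$ at a time starting from the outermost one. First I would set up intermediate expressions $E_j$ (for $j = 0, 1, \dots, N$) that agree with the left-hand side except that the secret operations $g_A^N, g_A^{N-1}, \dots, g_A^{j+1}$ have been deleted, so that $E_N$ is the full left-hand side and $E_0$ is the right-hand side. The goal then reduces to establishing $E_j \sim_B E_{j-1}$ for each $j = 1, \dots, N$ and concluding $E_N \sim_B E_0$ by transitivity of the equivalence relation $\sim_B$.

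For the individual step $E_j \sim_B E_{j-1}$, the idea is to collapse all the outer Bob operations into a single element of $\cT_B$ and apply Definition~\ref{def:secrecy} once. Concretely, I would write $\Phi_j := f_B^N \circ \dots \circ f_B^{j+1} \circ f_B^j$ and $\psi_j := f_B^{j-1} \circ g_A^{j-1} \circ \dots \circ f_B^1 \circ g_A^1 \circ f(\rho)$, so that $E_j = \Phi_j \circ g_A^j(\psi_j)$ and $E_{j-1} = \Phi_j(\psi_j)$. Two facts make this decomposition legitimate: since $\cT_B$ is a submonoid it is closed under concatenation, hence $\Phi_j \in \cT_B$; and since every $f \in \cT$ maps $\Omega$ into itself, the intermediate state $\psi_j$ lies in $\Omega$. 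Applying the secrecy condition with initial state $\psi_j$, secret operation $g_A^j \in \cT_A^S$ and Bob operation $\Phi_j \in \cT_B$ yields $\Phi_j \circ g_A^j(\psi_j) \sim_B \Phi_j(\psi_j)$, which is exactly $E_j \sim_B E_{j-1}$.

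The pre-processing by the global map $f$ requires no separate argument: because $f(\rho) \in \Omega$ and Definition~\ref{def:secrecy} quantifies over all states in $\Omega$, the leading global operation is simply absorbed into the state $\psi_j$ at every step. The base case is immediate, either as $N=0$ (both sides equal $f(\rho)$, so reflexivity applies) or as $N=1$ handled by a single direct invocation of secrecy.

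The main obstacle to watch for is that $\sim_B$ is only assumed to be an equivalence relation, not a congruence with respect to the transformations: there is in general no implication $x \sim_B y \Rightarrow h(x) \sim_B h(y)$. This rules out the tempting naive induction that would apply $f_B^N \circ g_A^N$ to both sides of the $(N-1)$-level statement. The peeling-from-outside strategy sidesteps this entirely, since at every step the equivalence is produced directly by the secrecy definition rather than transported through an operation; the only structural properties ever invoked are the monoid closure of $\cT_B$ and the transitivity of $\sim_B$.
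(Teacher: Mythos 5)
Your proof is correct and takes essentially the same route as the paper's: both peel off the outermost secret operation one step at a time, absorbing the already-processed tail into an intermediate state in $\Omega$ (your $\psi_j$ is exactly the paper's $\rho^{(j-1)}$), collapsing the accumulated Bob operations into a single element of $\cT_B$ by monoid closure, and concluding by transitivity of $\sim_B$. Your write-up merely makes explicit two points the paper leaves implicit --- the use of submonoid closure of $\cT_B$ and the observation that $\sim_B$ is not assumed to be a congruence, which is precisely why the outside-in order is the right one.
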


\subsection{Extended secrecy}

We may also ask whether Alice's actions stay secret to Bob in the presence of an additional global transformation $f\in\cT$. 
Transformations such as a subsystem swap or a communication channel may  break secrecy; others, like the use of a PR box, do not.\footnote{In generalized probability theories, PR boxes can be seen as transformations that take classical inputs and return outputs (Appendix~\ref{appendix:GPT}).}
For this situation, we define an extended notion of secrecy in the spirit of Definition~\ref{def:secrecy}, which reduces to Definition~\ref{def:secrecy} in the case $f = \text{id}$. 
Here, Bob may try to post-process information before and after the global transformation.


\begin{definition}[Extended secrecy]
\label{def:extended_secrecy}
Let $A$ be an agent with access to secret operations  towards  an agent $B$, $\cT_A^S\subseteq \cT_A$ .
We say that $\cT_A^S$ is in addition secret (towards $B$)
\emph{ in the presence of a global transformation} $f\in\cT$ if 
$$ f_B \circ f \circ f'_B \circ g_A (\rho) \ \sim_B \ f_B \circ f \circ f'_B (\rho), $$ 
for all $\rho \in \Omega,\ g_A \in \cT_A^S,\ f_B, f'_B \in \cT_B$.
We say that the agents are \emph{mutually secret in the presence of} $f$ if all actions in $\cT_A$ are secret towards $B$ in the presence of $f$ and vice-versa. 
\end{definition}

We can now show that, analogously to Proposition~\ref{prop:secrecy}, further pre- and post-processing by Alice and Bob cannot lift the secrecy.

\begin{restatable}[Robustness of extended secrecy]{proposition}{PropRobustnessExtendedSecrecy}
\label{prop:extended_secrecy}
If an agent $A$ only uses secret operations $g_A\in\cT_A^S$ with respect to the agent $B$ in the presence of $f\in\cT$, then further pre- and post-processing cannot lift the secrecy, that is
\begin{align*}
    &  \left( \bigcirc_{i=1}^N f_B^i \circ \green{g_A^i} \right)  \circ f \circ  \left( \bigcirc_{i=1}^N {f'}_B^i \circ \green{{g'}_A^i} \right) \circ g(\rho) \\
    &\sim_B \  \left( \bigcirc_{i=1}^N f_B^i \right)  \circ f \circ  \left( \bigcirc_{i=1}^N {f'}_B^i \right) \circ g(\rho)
\end{align*}
for all states  $\rho \in \Omega$, 
local operations $ \{\green{g_A^i}\}_i \subseteq \cT_A$ and $ \{f_B^i \}_i \subseteq \cT_B$, global operations $ g \in \cT$ and $N \in \mathbb N$.
\end{restatable}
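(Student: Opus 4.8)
The plan is to reduce the statement to the two secrecy facts already in hand, splitting the expression at the central global transformation $f$ and removing Alice's operations on each side by a different mechanism. First note that the hypothesis of Definition~\ref{def:extended_secrecy} presupposes that $A$ already has secret operations $\cT_A^S$ towards $B$ in the plain sense of Definition~\ref{def:secrecy}; hence both basic secrecy and secrecy in the presence of $f$ are available, and so is the robustness statement of Proposition~\ref{prop:secrecy}. Throughout I would use that $\cT_B$ is a submonoid (so any composition of Bob operations is again a single element of $\cT_B$), that $\cT$ is a monoid (so any composition of global operations lies in $\cT$), and that $\sim_B$ is an equivalence relation, in particular transitive.

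\textbf{Step 1 (the post-$f$ block).} The plan is to regard the entire sub-expression to the right of the left block as a single global pre-processing, $\tilde f := f \circ \left( \bigcirc_{i=1}^N {f'}_B^i \circ {g'}_A^i \right) \circ g \in \cT$. The left-hand side of the claim is then $\left( \bigcirc_{i=1}^N f_B^i \circ \green{g_A^i} \right) \circ \tilde f(\rho)$, which is exactly the object treated by Proposition~\ref{prop:secrecy} with global operation $\tilde f$. Applying that proposition removes all of Alice's post-$f$ operations at once, giving $\left( \bigcirc_{i=1}^N f_B^i \circ \green{g_A^i} \right) \circ \tilde f(\rho) \sim_B \left( \bigcirc_{i=1}^N f_B^i \right) \circ \tilde f(\rho)$.

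\textbf{Step 2 (the pre-$f$ block).} It then remains to delete Alice's operations $\green{{g'}_A^i}$ that now sit inside $\tilde f$, i.e.\ before the global $f$. Here basic secrecy is useless, because these operations are separated from $f$ by other Alice operations and by Bob operations, so I would instead peel them off one at a time using extended secrecy, starting from the one closest to $f$ (index $N$) and working inward. Writing $F_B := \bigcirc_{i=1}^N f_B^i \in \cT_B$ for the merged post-$f$ Bob block, the claim to establish by downward induction on $k$ is
\[ F_B \circ f \circ \Big( \bigcirc_{i=1}^N {f'}_B^i \circ \green{{g'}_A^i} \Big) \circ g(\rho) \ \sim_B \ F_B \circ f \circ \Big( \bigcirc_{i=k+1}^N {f'}_B^i \Big) \circ \Big( \bigcirc_{i=1}^k {f'}_B^i \circ \green{{g'}_A^i} \Big) \circ g(\rho), \]
where the removed pairs have left behind only their merged Bob operations $\bigcirc_{i=k+1}^N {f'}_B^i \in \cT_B$. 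The case $k=N$ is trivial and $k=0$ is the desired right-hand side. For the inductive step I would isolate the outermost surviving Alice operation $\green{{g'}_A^k}$, absorb the adjacent Bob operations into a single $f'_B := \big( \bigcirc_{i=k+1}^N {f'}_B^i \big) \circ {f'}_B^k \in \cT_B$, and apply Definition~\ref{def:extended_secrecy} with $f_B = F_B$, this $f'_B$, and initial state $\big( \bigcirc_{i=1}^{k-1} {f'}_B^i \circ \green{{g'}_A^i} \big) \circ g(\rho)$; this removes $\green{{g'}_A^k}$ and yields the level-$(k-1)$ expression. Transitivity of $\sim_B$ chains these single steps down to $k=0$, and a final use of transitivity links the relation of Step~1 with the induction of Step~2.

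\textbf{The main obstacle} will be Step~2: unlike the post-$f$ operations, the pre-$f$ Alice operations cannot be stripped by the ready-made Proposition~\ref{prop:secrecy}, since extended secrecy only licenses deleting a \emph{single} Alice operation flanked by one Bob operation before $f$ and Bob post-processing after $f$. The work is therefore in the bookkeeping: peeling from the outside in so that, at each stage, exactly one Alice operation is exposed next to $f$ with all previously freed Bob operations already merged into a single monoid element $f'_B \in \cT_B$, so that the extended-secrecy template applies verbatim and the intermediate states line up for transitivity.
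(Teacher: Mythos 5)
Your proof is correct and takes essentially the same route as the paper's: first remove the post-$f$ Alice operations via Proposition~\ref{prop:secrecy} (the paper absorbs everything to the right of that block into a state $\tilde\rho$ rather than a global map $\tilde f\in\cT$, an immaterial difference), then peel the pre-$f$ operations ${g'}_A^k$ one at a time from the outside in using Definition~\ref{def:extended_secrecy}, merging Bob's freed operations into a single element of the monoid $\cT_B$ and chaining with transitivity of $\sim_B$. The paper's intermediate states $\rho^{(j)}$ play exactly the role of your level-$k$ expressions in Step~2.
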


In particular, for the case in which Bob only implements post-processing at the very end, Proposition~\ref{prop:extended_secrecy} implies that $\cT_A^S$ forms a monoid.

\begin{corollary}[Secret monoid]
The set $\cT_A^S$ of secret operations in the presence of a global function $f\in\cT$ forms a monoid, i.e.\ $\text{id}\in\cT_A^S$ and
$$ f_A,g_A \in \cT_A^S \implies f_A \circ g_A \in \cT_A^S . $$
\end{corollary}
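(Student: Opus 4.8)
The plan is to verify the two defining properties of a monoid directly from Definition~\ref{def:extended_secrecy}, exploiting that $\cT_A^S\subseteq\cT_A$ and that $\cT_A$ is itself a submonoid of $\cT$. Recall that to certify $h_A\in\cT_A^S$ I must check two things: first that $h_A\in\cT_A$, and second that $h_A$ satisfies the extended-secrecy relation $f_B\circ f\circ f'_B\circ h_A(\rho)\sim_B f_B\circ f\circ f'_B(\rho)$ for all $\rho\in\Omega$ and all $f_B,f'_B\in\cT_B$. Associativity of $\circ$ is inherited from $\cT$, so it need not be re-proved.

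For the identity, note that $\text{id}\in\cT_A$ because $\cT_A$ is a monoid. The secrecy relation for the choice $g_A=\text{id}$ reads $f_B\circ f\circ f'_B\circ\text{id}(\rho)\sim_B f_B\circ f\circ f'_B(\rho)$, whose two sides are literally the same element of $\Omega$; reflexivity of the equivalence relation $\sim_B$ then yields $\text{id}\in\cT_A^S$.

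For closure, I would take $f_A,g_A\in\cT_A^S$ and set $h_A:=f_A\circ g_A$. Since $\cT_A$ is closed under $\circ$, we immediately have $h_A\in\cT_A$, so only the secrecy relation remains to be checked, and I would establish it by a two-step chain. First apply the extended-secrecy property of $f_A$ to the state $g_A(\rho)\in\Omega$, keeping the \emph{same} $f_B,f'_B$, which gives
\begin{align*}
 f_B\circ f\circ f'_B\circ f_A\circ g_A(\rho)\ \sim_B\ f_B\circ f\circ f'_B\circ g_A(\rho).
\end{align*}
Then apply the extended-secrecy property of $g_A$ to the state $\rho$ with the same $f_B,f'_B$, which gives
\begin{align*}
 f_B\circ f\circ f'_B\circ g_A(\rho)\ \sim_B\ f_B\circ f\circ f'_B(\rho).
\end{align*}
Transitivity of $\sim_B$ chains these into the required relation for $h_A$, so $h_A\in\cT_A^S$. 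This is essentially the $N=2$ instance of Proposition~\ref{prop:extended_secrecy} in which the global pre-processing is $g=\text{id}$ and Bob stays idle apart from the single pre-processing map $f'_B$ and the single post-processing map $f_B$.

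The argument is essentially routine, so there is no deep obstacle; the only point demanding care is the quantifier structure together with the composition order. The step most prone to a slip is the reuse of the \emph{same} pair $f_B,f'_B$ in both invocations of the definition: this is legitimate precisely because Definition~\ref{def:extended_secrecy} quantifies over all $f_B,f'_B\in\cT_B$, so whatever pair appears in the target relation for $h_A$ can be fixed once and fed to both applications. Notably, nothing beyond the individual secrecy of $f_A$ and $g_A$ is used — in particular I never require them to commute with Bob's operations.
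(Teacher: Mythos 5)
Your proof is correct and takes essentially the paper's route: the paper obtains this corollary by specializing Proposition~\ref{prop:extended_secrecy} to the case where Bob post-processes only at the very end, and your two-step transitivity chain is exactly that $N=2$ instance (with $g=\text{id}$), as you yourself observe. One pedantic addendum: since Definition~\ref{def:extended_secrecy} presupposes plain secrecy in the sense of Definition~\ref{def:secrecy}, certifying $f_A\circ g_A\in\cT_A^S$ strictly also requires the $f$-free relation $f_B\circ f_A\circ g_A(\rho)\sim_B f_B(\rho)$, which follows by the identical chain, so nothing substantive is missing.
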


Naturally, if we further restrict the actions and knowledge of one of the agents (as in Proposition~\ref{prop:nested_agents}), secrecy is maintained.

\begin{restatable}[Restricted agents and secrecy]{corollary}{CorRestrictedAgentsSecrecy}
\label{cor:restricted_agents_secrecy}
Let $A$,  $B$ and $C$ be three agents, such that 
 $C$ is more restricted than $B$, that is $\cT_{C} \subseteq \cT_B$ and  $ [\rho]_{B} \subseteq [\rho]_{C}$, for all $ \rho \in \Omega$.
 
If $\cT_B$ was secret towards $A$ (in the presence of $f \in \cT$), the same is true of $\cT_C$. If $\cT_A$ was secret towards $B$ (in the presence of $f$), it is still secret towards $C$ (idem). 
\end{restatable}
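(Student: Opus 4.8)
The plan is to prove the two implications separately, treating the secrecy of $\cT_C$ towards $A$ and the secrecy of $\cT_A$ towards $C$ as distinct consequences of the two hypotheses $\cT_C \subseteq \cT_B$ and $[\rho]_B \subseteq [\rho]_C$. In both cases the strategy is to unwind the relevant definition (Definition~\ref{def:secrecy} or Definition~\ref{def:extended_secrecy}), substitute the inclusion hypotheses, and invoke the secrecy that already holds for $B$. I would handle the plain secrecy case first (setting $f = \text{id}$) and then note that the extended case follows by the same argument with the global $f$ inserted, since the two definitions differ only by the presence of $f$ and an extra pre-processing $f'_B$.

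For the first statement --- if $\cT_B$ is secret towards $A$, then so is $\cT_C$ --- the key observation is that secrecy of $\cT_B$ towards $A$ means $f_A \circ g_B(\rho) \sim_A f_A(\rho)$ for all $g_B \in \cT_B$ and $f_A \in \cT_A$. Since $\cT_C \subseteq \cT_B$ by hypothesis, every $g_C \in \cT_C$ is in particular an element of $\cT_B$, so the secrecy condition for $A$ applies verbatim to $g_C$. This direction is essentially immediate: restricting Bob's \emph{actions} to a smaller monoid can only remove ways of signalling, never create new ones. I would write this out in one or two lines, emphasizing that no use is made of the knowledge inclusion here --- only $\cT_C \subseteq \cT_B$ matters.

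For the second statement --- if $\cT_A$ is secret towards $B$, then it remains secret towards $C$ --- the relevant inclusion is the knowledge one, $[\rho]_B \subseteq [\rho]_C$. The hypothesis gives $f_B \circ g_A(\rho) \sim_B f_B(\rho)$, which by definition of $\sim_B$ means $[f_B \circ g_A(\rho)]_B = [f_B(\rho)]_B$. The goal is the analogous statement with $C$ in place of $B$, namely $f_C \circ g_A(\rho) \sim_C f_C(\rho)$ for all $f_C \in \cT_C$. Here I would use that $\cT_C \subseteq \cT_B$ to ensure $f_C \in \cT_B$, so that $B$'s secrecy applies to the pair $(g_A, f_C)$, yielding $[f_C \circ g_A(\rho)]_B = [f_C(\rho)]_B$; then the knowledge inclusion $[\cdot]_B \subseteq [\cdot]_C$ upgrades equality of $B$-classes to equality of $C$-classes, because two states sharing the same $B$-class necessarily share the same (larger) $C$-class. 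Concretely, $\sigma \sim_B \tau$ implies $\sigma \sim_C \tau$ whenever $[\sigma]_B \subseteq [\sigma]_C$ for all states, which is exactly Proposition~\ref{prop:nested_agents}'s characterization of $C$ being more restricted than $B$.

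The main obstacle --- really the only subtle point --- is keeping track of which of the two hypotheses ($\cT_C \subseteq \cT_B$ versus the knowledge inclusion) is needed in each half, and verifying the implication ``$\sim_B \Rightarrow \sim_C$'' carefully: one must check that finer indistinguishability ($\sim_B$) entails coarser indistinguishability ($\sim_C$), i.e.\ that the equivalence classes nest in the direction stated. This is precisely the content of $[\rho]_B \subseteq [\rho]_C$ read as ``$\sigma \sim_B \tau \Rightarrow \sigma \sim_C \tau$,'' so I would state that equivalence explicitly before using it. Finally, I would remark that the extended-secrecy version requires no new idea: inserting the common global $f$ (and the pre-processing $f'_B$, which by $\cT_C \subseteq \cT_B$ may be taken in $\cT_C \subseteq \cT_B$) into both sides leaves every step intact, so the same two-part argument closes the proof.
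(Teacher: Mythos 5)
Your proof is correct and follows essentially the same route as the paper's (much terser) argument: the inclusion $\cT_C \subseteq \cT_B$ handles secrecy of $\cT_C$ towards $A$ and restricts $C$'s post-processing, while the class inclusion $[\rho]_B \subseteq [\rho]_C$, read as $\sigma \sim_B \tau \implies \sigma \sim_C \tau$, upgrades $B$-indistinguishability to $C$-indistinguishability. Your version simply makes explicit the bookkeeping of which hypothesis is used where, including the extended case with the global $f$ and the pre-processing $f'_C \in \cT_C \subseteq \cT_B$, which the paper leaves implicit.
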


\section{Commuting agents}
\label{sec:commutativity}

\begin{figure*}[t]
    \centering
    {\bf a.}
    \quad
     \includegraphics[width=0.2\textwidth]{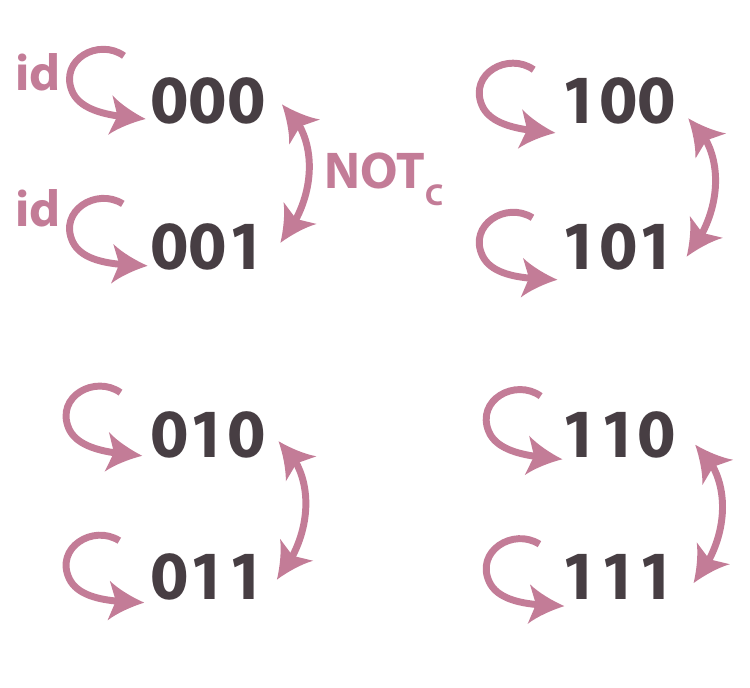}   
     \qquad
     {\bf b.}
    \includegraphics[width=0.2\textwidth]{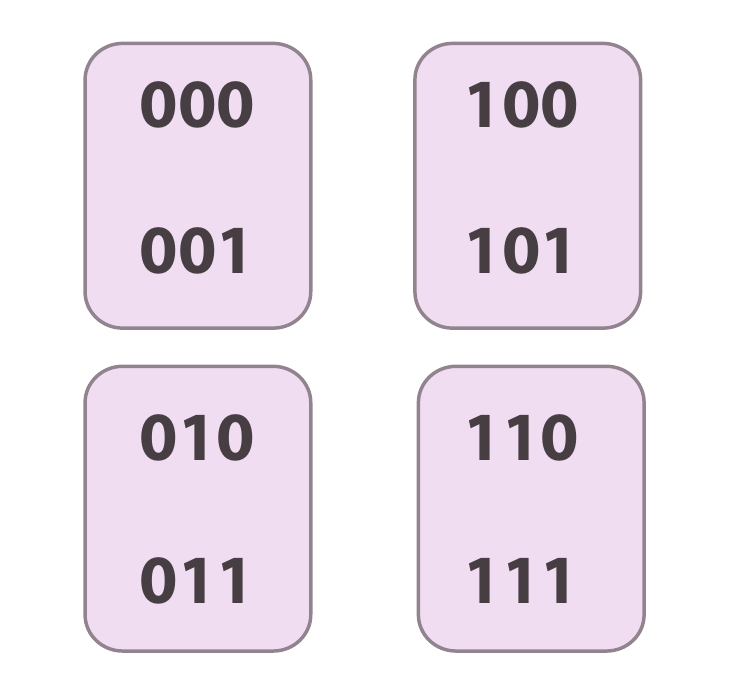}
     \qquad
     {\bf c.}
      \quad
    \includegraphics[width=0.2\textwidth]{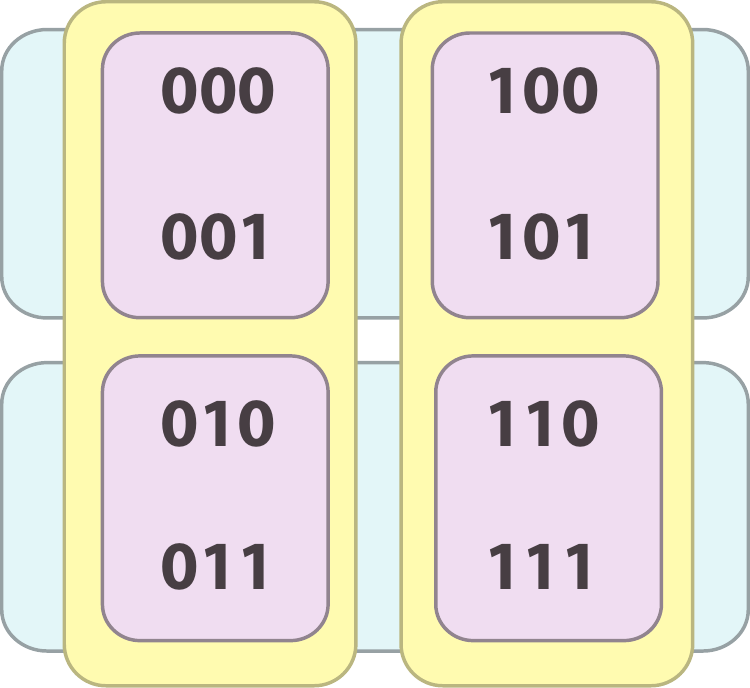}

    \caption{{\bf Three-bit example.} Consider again the theory described by the state space $\Omega$  of 3 bits, and all transformations on those bits. {\bf a.} All  operations $\cT_C$ that change the third bit  (of which $\text{id}$ and $\textsc {not}_C$ are labeled). {\bf b.} Equivalence classes  $[x]_{\cancel C}$ built according to Definition~\ref{def:induced_perspective}. These correspond to the view of an agent who can only distinguish the first two bits. The equivalence relation $\sim_{\cancel C}$, which coarse-grains over the functions applied to the third bit, gives us the largest effective state space relative to which functions in $\cT$ are secret. {\bf c.} More coarse-grained equivalence classes $[x]_A$ (vertical, yellow) and $[x]_B$ (horizontal, blue), corresponding to an agent $A$ who can only distinguish the first bit and an agent $B$ who only sees the second bit, respectively. Operations in $\cT_C$ are still secret relative to these two agents. In addition, operations on the first bit are secret towards $B$ and vice-versa. 
     These smaller effective state spaces correspond to equivalence relations on the effective state space $\Omega_{\cancel C}$ (as in the nested agents of  Proposition~\ref{prop:nested_agents}). The two-bit space $\Omega_{\cancel C}$ is a \emph{common state space} of $A$ and $B$, including states that could be distinguished if the  two agents could work together, with $[x]_{\cancel C} = [x]_A \cap [x]_B$.
}

    \label{fig:nested}
\end{figure*}

Now we explore how secrecy is affected when the actions of two agents $A$ and $B$ commute. This is particularly relevant in the context of the non-signalling principle, since actions at space-like separation naturally commute. 

\begin{definition}[Commuting agents]
\label{def:commuting_agents}
We say that two agents $A$ and $B$ commute if  
$$ f_B \circ g_A (\rho) = g_A \circ f_B (\rho) , $$
for all $  \rho\in\Omega, \ g_A \in \cT_A , \ f_B \in \cT_B$. 
\end{definition}

For example, in field theory commutativity holds for measurements or field interactions at space-like separation, and this is in general how causality is recovered there \cite{Peskin1995}\footnote{The simplest illustration of this is the commutation of the Klein-Gordon field operators 
$\phi(x)$ and $\phi(y)$ at space-like separated $x$ and $y$,
$ [\phi(x),\phi(y)] = 0 .$
Such a commmutation condition is also referred to in field theory as the \emph{locality postulate}~\cite{Banks2008}.}.
Motivated by this, we here take the commutation of actions in space-like separated regions as a fundamental building block in deriving agents that are secret relative to each other. Note that in particular, finding commuting sets of transformations in $\cT$ is something that can be done prior to definitions of local agents; this is shown explicitly in Ref.~\cite{DelRio2015}.\
{Commutation relations result in a nice algebraic structure --- a lattice --- in the space of transformations~\cite{DelRio2015}. This is also the case for the von Neumann bicommutant in operator algebras \cite{Neumann1930}.} Commutation is also an operational property of the theory: for example, commutation is  independent of the choice of  reference frames in relativity and quantum field theory~\cite{Peskin1995,Banks2008}.  
If two agents commute, secrecy follows from simpler conditions.

\begin{restatable}[Secrecy for commuting agents]{proposition}{PropCommutativitySimplifiesSecrecy}
\label{prop:commutation}
If $A$ and $B$ commute, then if there exists a subset of actions $\cT_A^S \subseteq \cT_A$  such that, $\forall \rho\in\Omega,\  g_A \in \cT_A^S, f_B \in \cT_B$, 
   \begin{align*}
   & f_B \circ f\circ g_A (\rho) \ \sim_B \ f_B \circ f (\rho), 
\end{align*}
then $\cT_A^S$ is secret towards $B$ in the presence of $f$. 
In particular, 
$g_A (\rho) \sim_B  \rho$ for all $g_A \in \cT_A, \rho \in \Omega$  implies  secrecy of $A$ towards $B$. 
\end{restatable}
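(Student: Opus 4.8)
The plan is to reduce the extended-secrecy condition of Definition~\ref{def:extended_secrecy} to the simpler hypothesis by using commutation to absorb Bob's pre-processing $f'_B$ into the initial state. Concretely, I want to establish
\begin{align*}
 f_B \circ f \circ f'_B \circ g_A (\rho) \ \sim_B \ f_B \circ f \circ f'_B (\rho)
\end{align*}
for all $\rho\in\Omega$, $g_A\in\cT_A^S$ and $f_B,f'_B\in\cT_B$, starting from the assumed simpler condition $f_B \circ f \circ g_A(\rho)\sim_B f_B\circ f(\rho)$, which holds for every state.

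The first step uses Definition~\ref{def:commuting_agents}: since $A$ and $B$ commute, $f'_B\circ g_A(\rho)=g_A\circ f'_B(\rho)$ for every $\rho$. Applying $f_B\circ f$ to both sides rewrites the left-hand side above as $f_B\circ f\circ g_A\circ f'_B(\rho)$. The key move is then to read $\sigma:=f'_B(\rho)$ as a fresh state in $\Omega$, so that the left-hand side becomes $f_B\circ f\circ g_A(\sigma)$. The second step invokes the hypothesis, which is quantified over all states and hence holds for $\sigma$ in particular: $f_B\circ f\circ g_A(\sigma)\sim_B f_B\circ f(\sigma)$. Unfolding $\sigma$ back to $f'_B(\rho)$ gives $f_B\circ f(\sigma)=f_B\circ f\circ f'_B(\rho)$, which is exactly the right-hand side we wanted, closing the main claim. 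For the ``in particular'' statement I would run the analogous computation for ordinary secrecy (Definition~\ref{def:secrecy}): commutation gives $f_B\circ g_A(\rho)=g_A\circ f_B(\rho)=g_A(f_B(\rho))$, and the invariance hypothesis $g_A(\tau)\sim_B\tau$ applied to the state $\tau=f_B(\rho)$ yields $g_A(f_B(\rho))\sim_B f_B(\rho)$, i.e.\ secrecy of $A$ towards $B$.

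The step I expect to require the most care is the justification for treating $f'_B(\rho)$ as an arbitrary initial state rather than trying to ``pull $f'_B$ through'' the relation $\sim_B$. The temptation, absent commutation, would be to argue from $g_A(\rho)\sim_B\rho$ that $f'_B\circ g_A(\rho)\sim_B f'_B(\rho)$ --- but this is illegitimate, since $\sim_B$ is \emph{not} assumed to be a congruence for $\cT_B$ (applying a local operation $f'_B$ to two $B$-indistinguishable states need not preserve indistinguishability). Commutation is precisely what lets me sidestep this: by reordering $f'_B$ and $g_A$ first, I only ever apply the hypothesis to a concrete state, never pushing the equivalence through a $B$-operation. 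Making this logical dependence explicit, and checking that the reordering is valid pointwise for every $\rho$ (as Definition~\ref{def:commuting_agents} guarantees), is the crux of the argument; the remaining manipulations are purely formal substitutions.
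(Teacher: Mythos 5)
Your proof is correct and follows essentially the same route as the paper's: both use commutation of $g_A$ with $f'_B$ (resp.\ $f_B$) to rewrite the left-hand side as the hypothesis applied to the fresh state $f'_B(\rho)\in\Omega$ (resp.\ $f_B(\rho)$), then invoke the universally quantified assumption. Your remark that one must not push $\sim_B$ through $f'_B$ --- since $\sim_B$ is not assumed to be a congruence for $\cT_B$ --- correctly identifies why commutation is the essential ingredient, exactly as in the paper's one-line proof.
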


\subsection{Secrecy from commutation}

Starting only from commutation relations on the global transformations, we can construct descriptions of local agents that have secret actions with respect to each other. 
More specifically, given any two commuting submonoids $\cT_A,\cT_B\subseteq\cT$, we can construct equivalence relations $\sim_A,\sim_B$ so that two agents Alice $(\sim_A,\cT_A)$ and Bob $(\sim_B,\cT_B)$ have secret actions with respect to each other. 

The first step is to start with transformations $\cT_A$ (``Alice's transformations"), and look for the most generous effective state space $\Omega_{\cancel A}$ that is insensitive to transformations in $\cT_A$. This will model the perspective of an agent, Bob, who cannot detect Alice's actions. Essentially, this perspective identifies sets of global states that Alice can locally make ``converge'' to the same state.

\begin{definition}[Perspective insensitive to transformations]
\label{def:induced_perspective}
Let $\cT_A \subseteq \cT$  be a submonoid of transformations.  
First we define a binary relation $\sim'_{\cancel A}$\ 
in $\Omega$  called \emph{convergence through $\cT_A$} as 
$$ \rho \sim'_{\cancel A} \sigma \iff \exists \ f_A, g_A \in \cT_A  \text{ s.t. } f_A(\rho) = g_A(\sigma) , $$
We take the transitive closure  $\sim_{\cancel A}$ of  $\sim'_{\cancel A}$  to  define the  \emph{perspective insensitive to transformations} $\cT_A$,  
\begin{align*}
     \rho \sim_{\cancel A} \sigma \iff 
     \exists\ n \in \mathbb N, \ \{\tau_i\}_{i=1}^n  \subseteq \Omega: \\  \rho\sim'_{\cancel A}\tau_1 \sim'_{\cancel A} \tau_2 \sim'_{\cancel A}  \dots \sim'_{\cancel A} \tau_n \sim'_{\cancel A} \sigma.
\end{align*}
\end{definition}

The above construction gives us minimal restrictions for independent agents. 
The following theorem is adapted from~\cite{DelRio2015}. 

\begin{restatable}[Deriving secret agents]{theorem}{ThmDerivingSecretAgents}
\label{thm:secret_agents}
Commuting submonoids $\cT_A,\cT_B\subseteq\cT$ give rise to descriptions of mutually secret agents  $$A=(\sim_{\cancel B},\cT_A), \qquad  B=(\sim_{\cancel A}, \cT_B).$$
\end{restatable}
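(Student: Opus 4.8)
My plan is to establish mutual secrecy by invoking the simplification already proved in Proposition~\ref{prop:commutation}, so that essentially no new work beyond unwinding definitions is required. Note first that the two candidate agents $A=(\sim_{\cancel B},\cT_A)$ and $B=(\sim_{\cancel A},\cT_B)$ commute in the sense of Definition~\ref{def:commuting_agents}: commutation is a statement about the transformations $\cT_A,\cT_B$ alone, which are assumed to commute, and is insensitive to which equivalence relations we attach to the agents. Since the construction is manifestly symmetric under swapping $A\leftrightarrow B$, it suffices to show that $\cT_A$ is secret towards $B$; the secrecy of $\cT_B$ towards $A$ then follows by relabelling.

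The crucial point is that the equivalence relation carried by $B$ is precisely $\sim_{\cancel A}$, the perspective that was constructed to be blind to the transformations in $\cT_A$. I would therefore check the hypothesis of the ``in particular'' clause of Proposition~\ref{prop:commutation}, namely that $g_A(\rho)\sim_{\cancel A}\rho$ for every $g_A\in\cT_A$ and $\rho\in\Omega$. This is immediate from Definition~\ref{def:induced_perspective}: choosing the two witnessing transformations to be $\text{id}$ and $g_A$ (both elements of the submonoid $\cT_A$, which contains the identity) gives $\text{id}\bigl(g_A(\rho)\bigr)=g_A(\rho)$, so $g_A(\rho)\sim'_{\cancel A}\rho$; and since $\sim_{\cancel A}$ is the transitive closure of $\sim'_{\cancel A}$, it contains this relation, whence $g_A(\rho)\sim_{\cancel A}\rho$. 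Plugging this into Proposition~\ref{prop:commutation} (with $f=\text{id}$) yields that $\cT_A$ is secret towards $B$.

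Exchanging the roles of the two agents gives $g_B(\rho)\sim_{\cancel B}\rho$ for all $g_B\in\cT_B$, and hence that $\cT_B$ is secret towards $A$; together the two agents are mutually secret, which completes the proof. I do not anticipate a genuine technical difficulty; the entire content lies in correct bookkeeping of the crossed indices. One must keep in mind that agent $A$ carries the relation $\sim_{\cancel B}$ but the transformations $\cT_A$, while $B$ carries $\sim_{\cancel A}$ and $\cT_B$, so that the secrecy of $\cT_A$ is tested against exactly the relation $\sim_{\cancel A}$ that was engineered to be insensitive to $\cT_A$. The only subtlety worth double-checking is that the identity indeed lies in each submonoid, since this is what makes $\sim'_{\cancel A}$ reflexive and legitimizes the one-line witness above.
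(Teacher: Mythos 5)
Your proposal is correct and follows essentially the same route as the paper's own proof: both reduce the claim to the hypothesis $g_A(\rho)\sim_{\cancel A}\rho$ of the ``in particular'' clause of Proposition~\ref{prop:commutation}, verify it via $\mathrm{id}\in\cT_A$ (your explicit witnesses $f_A=\mathrm{id}$, $g_A$ in Definition~\ref{def:induced_perspective} are exactly the paper's one-line argument, spelled out), and conclude the other direction by symmetry. Your additional remark that commutation of the agents depends only on $\cT_A,\cT_B$ and not on the attached equivalence relations is a correct and harmless clarification that the paper leaves implicit.
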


Indeed, all agents whose actions commute with $\cT_A$ and for whom transformations in $\cT_A$ are secret must be described by a coarse-graining of $\sim_{\cancel A}$ (Figure~\ref{fig:nested}).  This and related minor results can be found in Appendix~\ref{appendix:commuting}. 
In Appendix~\ref{appendix:constructions} we generalize Theorem~\ref{thm:secret_agents} to extended secrecy in the presence of global functions. There, we also extend the construction of the effective spaces of two agents to the case where the two monoids of transformations do not commute: without commutation, this construction is not as simple.

\subsection{Perceived commutation from secrecy}

We can now ask if the actions $\cT_A, \cT_B \subseteq \cT$ of two mutually secret agents must always commute. The answer is no, not at a global level: unbeknownst to the two agents, their actions could affect other degrees of freedom of the global theory. This can become relevant when the actions of two agents affect a common environment that is not directly accessible to them but could be recovered by a third party.

For example, consider again the state-space of three bits, where Alice can only see the first bit and Bob the second. Now imagine that Alice has access to all the transformations that change the first bit and, as a side effect, reset the third bit to $0$, while Bob has access to all the actions that act on the second bit and, as a side effect,  flip the third bit. From a global viewpoint, their actions do not commute. 
However, for someone that only had access to the combined knowledge of Alice and Bob (the first two bits), their actions would appear to commute.
For such an agent, only local time ordering of Alice and Bob's actions matters, as the two processes $f_A \circ f_B$ and $f_B \circ f_A$ are indistinguishable. 
This is yet another example of how subsystems and local descriptions represent simplified pictures of the global theory, reducing the degrees of freedom of the theory to an operational minimum \emph{for a given agent}, who in this case would not need to model global time ordering.

\section{Applications}
\label{sec:discussion}

\begin{figure*}[t]
\centering
\includegraphics{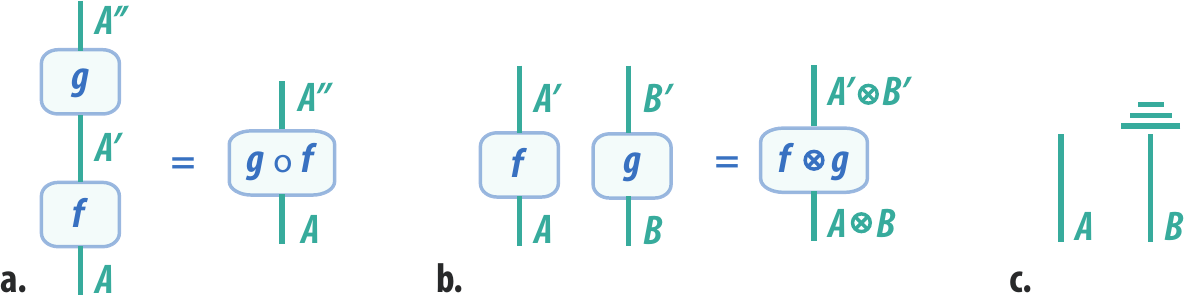}

\caption{  {\bf Process theories.} Processes theories are modular, bottom-up constructions that can be faithfully represented by diagrams~\cite{Coecke2006,Coecke2011,Coecke2014, Coecke2014b, Fritz2015}. Lines represent \emph{systems} (or ``objects'') and boxes \emph{processes} on systems: wires fed in from below a box can be understood as inputs to the process, while wires coming out on top represent the outputs of the process.
Diagrams can be composed due to the strong subsystem structure imposed on process theories, where actions are not seen as affecting the global space but explicitly associated with local systems.
{\bf a.} Processes can be composed in sequence when their output and input systems match. 
{\bf b.} Processes can be composed in parallel on combined systems.
{\bf c.} Discarding a subsystem (e.g.\ taking the partial trace) is indicated by three horizontal lines; in our approach this corresponds to coarse-graining over the relevant degrees of freedom (that is going to a smaller effective space). 
Other conditions can be imposed: e.g.\ in~\cite{Coecke2014b}, causal loops are forbidden, and outputs are always connected to inputs.   
}
\label{fig:process_theories}
\end{figure*}



\begin{figure*}[t]
\centering
\includegraphics{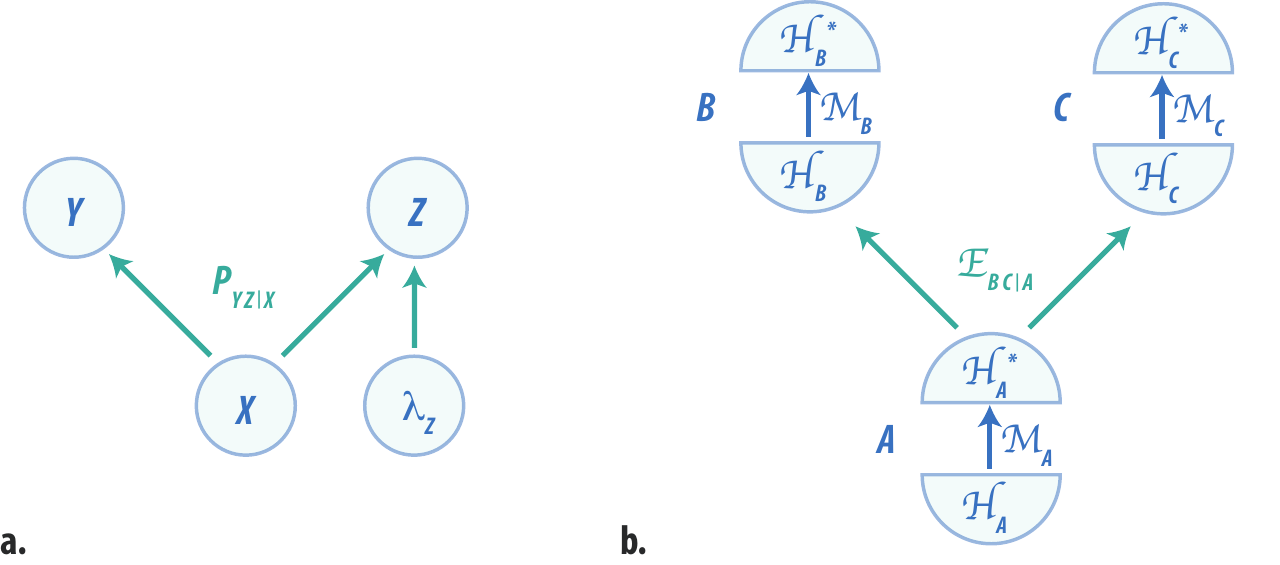}

\caption{  {\bf Causal structures.} 
{\bf a.} In classical causal models, nodes are associated with random variables corresponding to events, while arrows carry causal influence, as specified by conditional probability distributions like $P_{YZ|X}$. These models and distributions may be extended if one later learns of additional causes, like $\lambda_Z$. 
{\bf b.} A model for quantum causal structures proposed in \cite{Allen2016}, where each node is associated with the intervention of an agent in a local space.   A node $i$  is represented by  input and output Hilbert spaces, $H_i$ and $H_i^*$, and by a quantum instrument $\mathcal M_i$ that links the two and corresponds to the agent's intervention (for example, $\mathcal M_i$ could be a local measurement followed by a local preparation dependent on the outcome). 
Causal influence is explicitly carried by quantum maps like $\E_{BC|A}$ which acts like a channel from  $A$ to $B C$. 
}
\label{fig:causality}
\end{figure*}





In the previous sections, we have shown how to derive a notion of locality within a global theory starting from a primitive notion of individual agents, and their observed secrecy and commutation relations.

The operational approach laid out here  has the advantage of carrying very little assumptions about the underlying physical theory. For example, it goes to a higher level of abstraction than generalized probability theories by not taking for granted that all agents express their knowledge in terms of  reliable (classical) statistics about the outcomes of measurements. 

Our notion of effective state spaces  captures the concept of \emph{beables} of a theory: aspects (or classes) of states that can in fact be physically observed and distinguished \cite{Bell2004,Hardy2016}. Our approach highlights that beables are observer-dependent: for example, what appears to be a gauge may turn out to be only a local gauge \cite{Hardy2016}, and the same applies to ``global'' phases of quantum states or yet-to-be-discovered microscopic details of some structure.  We can never rule out the existence of a more refined underlying theory, but with effective state spaces we can 
tailor the descriptions used in a theory to the level of detail needed for a particular application. This goes in the direction of the work of Colbeck and Renner \cite{Colbeck2016}, where it is shown that quantum theory is \emph{complete} for the task of guessing measurement outcomes, and further refinements would be irrelevant.

As presented here, our framework simplifies the modelling of agents for the pedagogical purpose of highlighting the advantages of this general direction. In Appendix~\ref{appendix:relaxing} we show how one could relax some of our assumptions to model agents that are limited in time or who can only approximately distinguish states. 
In the following, we discuss further applications and relation to other work.

\subsection{Non-signalling}

One natural application of our extended notion of secrecy is the traditional non-signalling condition. 
To see this, imagine that the two agents are  cooperating, so that Alice is trying to communicate information to Bob 
by means of some action $g_A\in\cT_A$ on her side. Bob can now either directly apply post-processing $f_B\in\cT_B$, or he can wait for some time to pass, as represented by a function $u_t \in\cT$ that implements global time evolution over time $t$.
If Alice and Bob are mutually secret in the presence of $u_t$, for all $t\leq T$, we conclude that they cannot signal to each other in this time window.

In Appendix~\ref{appendix:GPT} we show explicitly how our notion of extended secrecy implies traditional non-signalling in the framework of generalized probabilistic theories~\cite{Hardy2001, Barrett2007}, where the state space consists of probability distributions over outcomes of possible measurements on physical systems. \

\subsection{Reconstructing space-time}

Building up on the example above, if 
two agents cannot signal in the presence of $u_t$ for $t\leq T$ and 
in addition  can signal in the presence of $u_t$ for $t> T$, this can be used to define a \emph{distance} between the two agents, via $d \propto T$. The proportionality constant can be interpreted as the speed of signal propagation, for example the speed of light. 

The challenge to obtaining a meaningful distance is two-fold: {\bf 1)} choosing a ``natural'' family of transformations $\{u_t \}_t$ to represent time evolutions, and {\bf 2)}  choosing a family of agents that do not conflate different types of coarse-grainings. For example, locality and macroscopicity each give rise to a natural notion of distance, relating to the space between agents and to precision of observation, respectively; the latter could be used to quantify chaos given a family of time evolutions.

More generally, we can try to use signalling between agents to infer properties of space-time of a given theory, as illustrated in Figure~\ref{fig:detectors}.
Some steps in this direction have been given for example in Refs.~\cite{Hardy2016, Hoehn2014, Cao2016}. This would be of particular interest in the context of field theories \cite{Halvorson2006, Wolters2013, Valente2013}. 
We leave the  generalization of the  operational approach depicted in Figure~\ref{fig:detectors} to reconstruct position as future work.

\subsection{Relation to modular approaches}

Our global approach complements modular, bottom-up constructions \cite{Hardy2012},  like process theories based on  symmetric monoidal categories~\cite{Coecke2006,Coecke2011,Coecke2014, Coecke2014b, Fritz2015}. 
For the purpose of comparison with our work, modular theories can be understood as theories of individual systems (or ``objects'') and local actions (``processes'') on those systems, which allow for parallel and sequential composition of processes on different systems. Typically, they assume that:
{\bf 1)}
 Processes with matching output and input systems can be composed sequentially. That is, a process $f: A \to A'$ can be composed with a process $g: A' \to A''$, to form a new process $g \circ f: A \to A''$ satisfying
    $$ g \circ f (A) = g ( f(A)) $$
    (Figure~\ref{fig:process_theories}.a). 
{\bf 2)}
 Any two systems $A$ and $B$ can be combined in parallel to form a composite system denoted by $ A\otimes B $.
 {\bf 3)}
Any two processes $f:  A \to A'$ and $g: B \to B'$  can be composed to yield a process $ f \otimes g: A \otimes B \to A'\otimes B' $  satisfying
$$
(f \otimes g) (A \otimes B) = f(A) \otimes g(B) 
$$ (Figure~\ref{fig:process_theories}.b). 
This last assumption implies that processes act locally without disturbing other systems, and that actions on independent systems always commute. 
This allows us to represent process theories in terms of diagrams that can be easily composed (Figure~\ref{fig:process_theories}). 

Our approach is more general in that we do not assume the strong subsystem structure imposed by conditions {\bf 2)} and {\bf 3)}.
As such, our work strengthens Coecke's argument that non-signalling can be derived from a simpler condition~\cite{Coecke2014b} (Appendix~\ref{appendix:terminality}).  
In general, our top-down view can be taken as a precursor and sanity check for process theories.
In complex global theories, a strong subsystem structure may not be clear cut from the start. The cautious researcher can first use our approach to test different  reduced descriptions for independence conditions. If she succeeds in finding independent effective spaces  --- which is not always possible --- she may then frame them as subsystems and attempt a modular construction.  

At a conceptual level, our approach gives a global interpretation to aspects of process theories that are more epistemic than physical. For example, if we think of subsystems as building blocks of a global space, it appears natural to see ``discarding a subsystem'' as a physical action, like throwing away a piece of lego (Figure~\ref{fig:process_theories}.c). However, if we start from the global space and see subsystems as arbitrary restricted descriptions, then ``discarding a subsystem'' corresponds to a coarse-graining over the relevant degrees of freedom (for example going from $\Omega_{AB}$ to an effective space $\Omega_{A}$), a change of perspective rather than a physical transformation.

\subsection{Relation to causal structures}

Our notion of secrecy between agents is analogous to \emph{causal independence} between events in graphs used to study causality in physics. 
Causal structures~\cite{Penrose1972,Spekkens2012,Pearl2000,Masanes2013,Ried2015,Chiribella2016b} try to capture the causal relations between events 
within a larger context (Figure~\ref{fig:causality}). Both causality (as expressed by Reichenbach's principle) and secrecy are guiding principles of 
a certain way of representing a theory (causal structures and restricted agents respectively) that help us understand a complex situation --- they are 
not necessarily fundamental features of the laws of nature. How useful the representation is depends both on the guiding principle and on the choice of
variables of interest (like events or agents).

Let us illustrate this. 
In classical causal graphs, events are represented by random variables, in principle subject to intervention (Figure~\ref{fig:causality}.a). 
As we move from purely classical scenarios  to more physical situations, like those involving quantum measurements, the formalism of causal structures is evolving 
to focus on agents and on explicit physical transformations as carriers of causal influence, similarly to our approach. 
For example, in the quantum causal structures of~\cite{Allen2016}, events can correspond to quantum systems  where agents can act locally (Figure~\ref{fig:causality}.b). 
Generally speaking, ``events'' embody a particular coarse-graining of a global picture into variables or subsystems of interest. As such, a single causal graph cannot reveal all the features of a complex theory ---  a different decomposition may explore new causal relations.\footnote{Also, we can never know if we only have access to an effective state space, and  there is a deeper theory that changes all the causal relations, e.g.~by providing new common causes.}
The choice of relevant nodes can be guided by \href{https://plato.stanford.edu/entries/operationalism/}{operationalism}: {\bf 1)} we start by picking ``variables'' that we care about (like the outcomes of an experiment, or a subsystem corresponding to the perspective and range of intervention of an agent); {\bf 2)} we then use Reichenbach's principle and independence conditions to complete the causal graph, by identifying further nodes and constraining the channels between them.\footnote{
E.g.~in studying the process of coherent copy $\alpha \ket{0}_A + \beta \ket{1}_A \to \alpha \ket{0}_B \ket{0}_C + \beta \ket{1}_B \ket1_C$  in~\cite{Allen2016}, we start with 3 nodes of interest $A$, $B$ and $C$, and are forced by Reichenbach's principle to complete the graph with a 4th node.}     
This procedure is similar in spirit to how in the present work we could start with the 
description of a few agents and use secrecy and commutation constraints to identify other subspaces and transformations of interest, or build a notion of locality.
How successful we are in this endeavour depends largely on the (subjective) starting point --- a poor initial choice of events or agents could make it impossible to find a meaningful causal graph or independent agents.

Even with a clever choice of initial variables, it could be that the guiding principle is not powerful enough to provide meaningful representations for all physical situations. This is likely the case in both approaches, which are still rooted in classical intuitions --- resulting in concepts like agents, Reichenbach's principle, and time order.  
In trying to explain a physical scenario in terms of these classical notions, we risk running into paradoxes such as the inconsistencies between quantum agents in~\cite{Frauchiger2016}. It remains to explore whether both our approach and causal models can handle this kind of physical challenges, and whether extensions to cover them would still be intuitive enough to help us make sense of the world.

\begin{acknowledgements}
We thank Bob Coecke and Barry Sanders  for discussions on discarding and non-signalling, Lucien Hardy, Ryszard Pawe{\l} Kostecki and 
Renato Renner for discussions on locality, John-Mark Allen and Mirjam Weilenmann for discussions on causality, David Jennings and Markus M\"uller for bringing up approximate distinguishability, Roger Colbeck for feedback on this manuscript, Miyuko for sanctuary, and Sandu Popescu for the mantra ``one idea, one paper''. This one goes out to Matt and Rob --- let's disagree again soon. 

LK is supported by the European Research Council via grant No.\ 258932, the Swiss
National Science Foundation through the National Centre of
Competence in Research \emph{Quantum Science and Technology}
(QSIT),  and   the  European  Commission  via  the  project \emph{RAQUEL}. 
LdR acknowledges support from ERC AdG NLST and EPSRC grant \emph{DIQIP}, from the FQXi grant \emph{Physics of the observer}, and from the
Perimeter Institute for Theoretical Physics. Research at Perimeter Institute is supported by the government of Canada through Industry Canada and by the Province of Ontario through the Ministry of Economic Development \& Innovation.
\end{acknowledgements}


\onecolumngrid

\appendix

\section*{\textsc{Appendix}}

\begin{description}
    \item[Appendix~\ref{appendix:proofs}:] Proofs of all statements in the manuscript.
    \item[Appendix~\ref{appendix:terminality}:] Relation between our secrecy conditions and Coecke's ``non-signalling from terminality'' argument in modular theories \cite{Coecke2014b}.
    \item[Appendix~\ref{appendix:relaxing}:] Relaxing some of the assumptions of the present approach.
    \item[Appendix~\ref{appendix:commuting}:] Additional minor results on commuting agents and minimal constructions.
    \item[Appendix~\ref{appendix:GPT}:] Application to generalized probability theories.
    \item[Appendix~\ref{appendix:constructions}:] Generalizing the derivation of secret agents to extended secrecy in the presence of global functions, and to the case where the two monoids of transformations do not commute.
\end{description}

\vspace{10mm}

\twocolumngrid
\section{Proofs}
\label{appendix:proofs}

\PropNestedAgents*

\begin{proof} 
For each direction:
\begin{itemize}
    \item[$1 \to 2$.] We build the equivalence relation in $\Omega_B$ as 
    \begin{align*}
        [\rho]_B \sim_{B \to C} [\sigma]_B 
        \iff 
        &[\rho']_C =  [\sigma']_C, \\ 
        & \forall\  \rho' \in [\rho]_B, \sigma' \in [\sigma]_B.
    \end{align*} 
    Since  $[\rho]_B \subseteq [\rho]_C$ for all $\rho \in \Omega$, $\sim_{B\to C}$ is a well-defined equivalence relation,  and the reduced space $\Omega_B / \sim_{B \to C}$ is in one-to-one correspondence with the space of the equivalence classes $[\rho]_C$. 
    
    \item[$2 \to 1$.] By assumption, the reduction
    $\h_C$ is isomorphic to $ \h_{B\to C} \circ \h_B. $ 
    Therefore  $[\rho]_C \cong [[\rho]_B]_{B\to C}$, and $[\rho]_B \subseteq [\rho]_C$.
\end{itemize}
\end{proof}

\PropRobustnessSecrecy*

\begin{proof}
We apply the non-signalling condition multiple times.
Define $\rho^{(j)} := \left ( \bigcirc_{i=1}^j f_B^i \circ g_A^i\right)   \circ f (\rho) $.
Starting from the left-hand side, we have 
\begin{align*}
     \left( \bigcirc_{i=1}^N f_B^i \circ g_A^i \right)  \circ f (\rho) 
    &= \ f_B^N \circ g_A^N (\rho^{(N-1)}) \\
    &\sim_B  f_B^N  (\rho^{(N-1)}) \\
    &=\ f_B^N \circ f_B^{N-1} \circ g_A^{N-1} (\rho^{(N-2)}) \\
    & \sim_B f_B^N \circ f_B^{N-1} (\rho^{(N-3)}) \\
    &\vdots \\
    & \sim_B f_B^N \circ \dots \circ f_B^1 \circ f (\rho) .
\end{align*}
\end{proof}

\PropRobustnessExtendedSecrecy*

\begin{proof}
The proof is analogous to the proof of Proposition~\ref{prop:secrecy} and uses Definition~\ref{def:extended_secrecy}; we also define
$$ \tilde \rho := f \circ  \left( \bigcirc_{i=1}^N {f'}_B^i \circ {g'}_A^i \right) \circ g(\rho) $$
and
$$ \rho^{(j)} := \left( \bigcirc_{i=1}^j {f'}_B^i \circ {g'}_A^i \right) \circ g(\rho) . $$
Then
\begin{align*}
   & \left( \bigcirc_{i=1}^N f_B^i \circ g_A^i \right)  \circ f \circ  \left( \bigcirc_{i=1}^N {f'}_B^i \circ {g'}_A^i \right) \circ g(\rho) \\
    &\sim_B f_B^N \circ \dots \circ f_B^1 (\tilde\rho) 
    \quad \flag{\text{secrecy}} \\
   &\sim_B f_B^N \circ \dots \circ f_B^1 \circ f \circ  {f'}_B^N (\rho^{(N-1)})
   \quad  \flag{\text{Def.~\ref{def:extended_secrecy}}} \\
    & \vdots \\
    &\sim_B f_B^N \circ \dots \circ f_B^1 \circ f \circ  {f'}_B^N \circ \dots \circ {f'}_B^1 \circ g(\rho).
\end{align*}
\end{proof}

\CorRestrictedAgentsSecrecy*

\begin{proof}
Since $\cT_C \subseteq \cT_B$, it is secret towards $A$. This also restricts the post-processing that $C$ can do, and since $\rho \sim_B \sigma \implies \rho \sim_C \sigma$, we have that  $\cT_A$ is secret towards $C$.
\end{proof}

\PropCommutativitySimplifiesSecrecy*

\begin{proof}
To show secrecy of $\cT_A$ towards $B$, we have 
\begin{align*}
f_B \circ g_A (\rho) 
&=  g_A \circ \underbrace{f_B (\rho)}_{\in \ \Omega}
 &&\flag{\text{commutativity}} \\
&\sim_B \ f_B (\rho),
\ &&\flag{\text{assumption}} 
\end{align*}
for all $\rho \in \Omega, g_A \in \cT_A, f_B \in \cT_B$.
To show secrecy in the presence of $f$, we use 
\begin{align*}
 f_B \circ f \circ f'_B \circ g_A (\rho)
&=  f_B \circ f \circ g_A \circ \underbrace{f'_B (\rho)}_{\in \ \Omega} \\
&\sim_B  f_B \circ f \circ f'_B (\rho),
\end{align*}
for all $\rho \in \Omega, g_A \in \cT_A, f_B, f'_B \in \cT_B$.
\end{proof}

\begin{restatable}{lemma}{LemmaInducedEquivalence}
The perspective  $\sim_{\cancel A}$ induced by a submonoid $\cT_A \subseteq \cT$ is an equivalence relation in $\Omega$. 
\end{restatable}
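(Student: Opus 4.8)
The plan is to verify the three defining properties of an equivalence relation---reflexivity, symmetry, and transitivity---in the order that is most economical. Since $\sim_{\cancel A}$ is \emph{defined} in Definition~\ref{def:induced_perspective} as the transitive closure of the base relation $\sim'_{\cancel A}$, the cleanest route is to establish reflexivity and symmetry for $\sim'_{\cancel A}$ first, and then invoke the standard fact that taking the transitive closure makes the relation transitive while preserving the reflexivity and symmetry already present. So the real content reduces to two very short checks on $\sim'_{\cancel A}$, plus a bookkeeping argument about chains.

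First I would check reflexivity and symmetry of $\sim'_{\cancel A}$. Reflexivity is exactly where the monoid hypothesis on $\cT_A$ enters: because $\cT_A$ contains the identity transformation $\text{id}$, choosing $f_A = g_A = \text{id}$ in the defining condition gives $\text{id}(\rho) = \text{id}(\rho)$, so $\rho \sim'_{\cancel A} \rho$ for every $\rho \in \Omega$. Symmetry is immediate from the symmetric form of the defining equation: if $f_A(\rho) = g_A(\sigma)$ witnesses $\rho \sim'_{\cancel A} \sigma$, then reading the same equality as $g_A(\sigma) = f_A(\rho)$ witnesses $\sigma \sim'_{\cancel A} \rho$, with the roles of the two transformations simply swapped.

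Finally I would pass to the transitive closure $\sim_{\cancel A}$. Transitivity holds by construction: given a chain from $\rho$ to $\sigma$ and a chain from $\sigma$ to $\mu$, their concatenation is a chain from $\rho$ to $\mu$, so $\rho \sim_{\cancel A} \mu$. Reflexivity is inherited because a one-link chain $\rho \sim'_{\cancel A} \rho$ already witnesses $\rho \sim_{\cancel A} \rho$. Symmetry is inherited because, by the symmetry of $\sim'_{\cancel A}$, each link of a chain from $\rho$ to $\sigma$ can be reversed, and the reversed links reassembled into a chain from $\sigma$ back to $\rho$.

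There is no genuine obstacle in this argument---it is routine. The only point that deserves explicit mention is that reflexivity relies essentially on $\text{id} \in \cT_A$; without the monoid assumption the relation need not even be reflexive. Everything else is the standard observation that the transitive closure of a reflexive and symmetric relation is a full equivalence relation.
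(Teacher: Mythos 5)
Your proof is correct and matches the paper's argument, which simply states that $\sim_{\cancel A}$ is ``by construction'' transitive, reflexive and symmetric; you have merely filled in the routine details of that construction. Your observation that reflexivity hinges on $\text{id} \in \cT_A$ (the submonoid hypothesis) is the one substantive point in the verification, and you identify it accurately.
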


\begin{proof}
By construction $\sim_{\cancel A}$ is  transitive,  reflexive and symmetric.
\end{proof}

\ThmDerivingSecretAgents*

\begin{proof}
We must show that 
\begin{align*}
f_B \circ g_A (\rho)\ &\sim_{\cancel A} \ f_B (\rho),
\end{align*}
for all $\rho \in \Omega, g_A\in\cT_A$ and $f_B\in\cT_B$. 
By Proposition~\ref{prop:commutation}, we only need to show 
$ g_A(\rho) \sim_{\cancel A} \rho $,
  for all $\rho \in \Omega,\ g_a \in \cT_A$.
This holds since  $\text{id}\in\cT_A$ (as $\cT_A$ is a monoid), and so  $g_A(\rho)  \sim_{\cancel A} \text{id} (\rho)$.
We proceed analogously to find the effective state space of $A$.
\end{proof}

\section{Relation to terminality}
\label{appendix:terminality}

\begin{figure*}[t]
\centering
\includegraphics{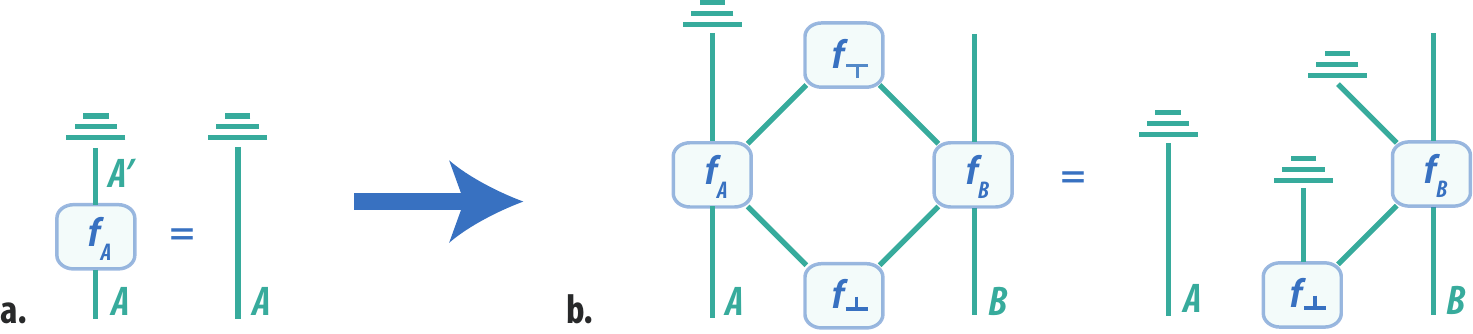}
\caption{ {\bf Terminality and non-signalling in process~\cite{Coecke2014b}.} 
{\bf a.} Terminality: discarding a system $A$ after applying a function $f_A$ is the same as discarding the system directly. 
{\bf b.} Non-signalling: in this setup, a global preparation process $f_\perp$ distributes two systems to Alice and Bob, who perform local operations $f_A$ and $f_B$ respectively on these systems and additional inputs $A$ and $B$. Given terminality, if system $A$ is discarded after this, no information about the original input on $A$ can travel to system $B$. This holds even in the presence of an effect $f_\top$ acting on joint outputs of $f_A$ and $f_B$.}
\label{fig:Coecke}
\end{figure*}

In Ref.~\cite{Coecke2014b}, 
Coecke argues that a process theory \cite{Coecke2006,Coecke2011} is non-signalling if it satisfies a simpler condition dubbed \emph{terminality}. 
Terminality states that local processes on a system right before ``discarding'' it cannot have any observable effect (Figure~\ref{fig:Coecke}.a). Discarding subsystems is a concept that corresponds to tracing out or coarse-graining over local information. For example, in quantum theory it is implemented by the partial trace: terminality is naturally satisfied for  completely positive trace-preserving operations on the discarded systems, but does not hold for non-deterministic effects such as projections onto particular outcomes of measurements~\cite{Coecke2014b}.

In our language, the condition of terminality corresponds to the independence condition $ f_A (\rho) \sim_B \rho, $
where $f_A$ are local functions on a system $A$ and $\sim_B$ corresponds to the local picture of other systems $B$ outside $A$. 
Recall that the assumptions behind process theories like~\cite{Coecke2014b} impose some structure on transformations and agents, in particular commutation between agents' local actions.  As we saw in Proposition~\ref{prop:commutation},  this independence condition together with commutation already implies secrecy.

It remains to see if our secrecy condition is equivalent to the non-signalling of~\cite{Coecke2014b}, depicted in Figure~\ref{fig:Coecke}.b.
This non-signalling corresponds roughly to secrecy under pre- and post-processing, as implemented by an initial  state preparation $f_\perp$ and a deterministic effect $f_\top$. 
In our picture, robustness of secrecy  under pre- and post-processing is ensured by Proposition~\ref{prop:secrecy}. 
In this case, pre-processing with a function $f_\perp$ can be included without loss of generality in the initial state, and post-processing with $f_\top$ is eliminated by the choice of local perspective $\sim_B$. 

Hence, the result of~\cite{Coecke2014b} that terminality implies non-signalling  follows from our propositions \ref{prop:secrecy} and \ref{prop:commutation} together. 
Our premise  that actions by different local agents commute is weaker than the assumptions employed by~\cite{Coecke2014b}.
In conclusion, our approach strengthens the argument in~\cite{Coecke2014b} for the significance of a condition like terminality. 
At the same time, we take a more general approach to subsystems than the bottom-up model of process theories in~\cite{Coecke2014b}, thus highlighting the role of commutation in the context of defining local agents and non-signalling.

\section{Relaxing some assumptions}
\label{appendix:relaxing}

Let us now give some guidelines on how to relax two of the assumptions of our framework, in order to cover more realistic representations of agents.

\subsection{Approximate distinguishability}

Often agents may not have clear-cut distinguishability criteria.
For example, an agent may categorize light frequencies into basic colours such as green and blue --- there may be some frequencies that the agent could file as both green and blue. In the language of PBR~\cite{Pusey2012}, the reduced states ``blue'' and ``green'' would be epistemic and not ontic (with respect to the underlying state space of frequencies $\Omega$).
Agents could also have a notion of approximate distinguishability, for example of the sort ``I can distinguish these two states with probability $1-p$.'' 

We propose a simple approach to address these cases. {\bf 1)} Build a generous effective state space $\Omega_B$ by assigning different reduced states to every two states in $\Omega$ that can be distinguished \emph{in principle} by the agent.   {\bf 2)} Build an  \emph{approximation structures} in the effective state space~\cite{DelRio2015}. An approximation structure comprehends all  neighbourhoods $\{ \ball^\eps(\rho)\}_{\rho \in \Omega_B, \eps \in \E}$ parameterized by whatever measure $\E$ is operational for the  agent. For example, one valid approximation structure for quantum states corresponds to the $\eps$-balls induced by the trace distance; another could be just the \emph{cover} $\{$blue, green, \dots$\}$ of the possible colours assigned to each frequency. {\bf 3)} Build notions of \emph{approximate secrecy}, where we can demand for example
$$  \h_B  \circ  f_B \circ g_A (\rho)\  \in \ball^\eps (\h_B \circ  f_B (\rho)),$$ 
for all $\rho \in \Omega, \ g_A \in \cT_A, \ f_B \in \cT_B$, instead of the stricter condition of secrecy, where we demand that the two final states are completely indistinguishable from $B$'s perspective. 
The properties of approximate secrecy are  inherited from the  approximation structure.

\subsection{Time-limited agents}

In this work we model local actions as monoids $\cT_A$ and $\cT_B$. When applying secrecy to find non-signalling conditions between time-limited agents, the monoidal structure of actions is only a convenient approximation, which allows us to concatenate post-processing actions indefinitely. 
The intuition behind this approximation is that
 Alice and Bob's actions can be implemented essentially instantaneously, compared to the relevant time scales.  One example  would be the action of choosing a bit as an input to a measurement, by pressing a button in Alice's lab (see Appendix~\ref{appendix:GPT}). With this interpretation, $\cT_A$ and $\cT_B$ can consistently be modelled as monoids, because it is assumed that the concatenation of two instantaneous actions can again be implemented instantaneously. 
In this model, time evolution is explicitly modelled by global functions $u_t \in \cT$; this could include the actual effect of pressing the button.

When functions in $\cT_A$ and $\cT_B$ on Alice's and Bob's sides take some finite time $t > 0$ to implement, 
we may instead of full monoidal structure only have 
$ f_A \circ g_A \in \cT $
if the functions $f_A$ and $g_A$ together take less than a given time $T$ to implement. In this case, the notion of secrecy or non-signalling and our results that relate to it can still be recovered for functions and concatenations of functions that do not exceed this time-frame $T$.

\section{Commuting agents: additional results}
\label{appendix:commuting}

In the main text, we have noted that the perspective $\sim_{\cancel A}$ induced by transformations $\cT_A$ is minimal: 
any agent $(\sim_B,\cT_B)$ whose actions $\cT_B$ commute with $\cT_A$ and towards whom transformations in $\cT_A$ are secret must in fact be described 
by a coarse-graining of 
$\sim_{\cancel A}$. 

\begin{proposition}[Induced perspective is minimal]
\label{prop:minimal_perspective}
Let $B=(\sim_B, \cT_B)$ be an agent towards whom $\cT_A$ is secret, and such that $\cT_A$ and $\cT_B$ commute. Then
\begin{align*}
    [\rho]_B &\supseteq [\rho]_{\cancel A}, \quad \forall \ \rho \in \Omega,
\end{align*}
with $\sim_{\cancel A}$ the equivalence relation induced by $\cT_A$. \\ 
This implies that there exists an equivalence relation $\sim_{\cancel A \to B}$ in the effective state space $\Omega / \sim_{\cancel A}$ such that 
$$ \Omega_B \cong (\Omega / \sim_{\cancel A}) / \sim_{\cancel A \to B} . $$
\end{proposition}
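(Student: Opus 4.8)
The plan is to prove the inclusion $[\rho]_B \supseteq [\rho]_{\cancel A}$ first, and then invoke Proposition~\ref{prop:nested_agents} (nested agents) to obtain the quotient structure as an immediate corollary. The second claim is essentially a restatement of direction $1 \to 2$ of that proposition, applied to the pair $(\sim_{\cancel A}, \sim_B)$ in place of $(\sim_C, \sim_B)$, so once the inclusion is in hand the work is done.

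To establish $[\rho]_{\cancel A} \subseteq [\rho]_B$, I would take an arbitrary $\sigma \in [\rho]_{\cancel A}$ and show $\sigma \in [\rho]_B$, \ie $\sigma \sim_B \rho$. By Definition~\ref{def:induced_perspective}, $\sigma \sim_{\cancel A} \rho$ means there is a finite chain $\rho \sim'_{\cancel A} \tau_1 \sim'_{\cancel A} \dots \sim'_{\cancel A} \tau_n \sim'_{\cancel A} \sigma$, where each single link $\mu \sim'_{\cancel A} \nu$ witnesses transformations $f_A, g_A \in \cT_A$ with $f_A(\mu) = g_A(\nu)$. Since $\sim_B$ is itself an equivalence relation (hence transitive), it suffices to prove that each \emph{elementary} link implies $\mu \sim_B \nu$; the full chain then collapses by transitivity. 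So the core of the argument is the single-link case.

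For a single link, the hypotheses give me exactly what I need. Because $\cT_A$ is secret towards $B$ and $\cT_A$, $\cT_B$ commute, Proposition~\ref{prop:commutation} (secrecy for commuting agents) applies, and its final clause yields $h_A(\lambda) \sim_B \lambda$ for every $h_A \in \cT_A$ and every $\lambda \in \Omega$. Applying this to both transformations in the link, I get $f_A(\mu) \sim_B \mu$ and $g_A(\nu) \sim_B \nu$. But $f_A(\mu) = g_A(\nu)$ by the definition of $\sim'_{\cancel A}$, so stringing these together via symmetry and transitivity of $\sim_B$ gives $\mu \sim_B f_A(\mu) = g_A(\nu) \sim_B \nu$, \ie $\mu \sim_B \nu$. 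This closes the elementary case.

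The main obstacle, if any, is purely bookkeeping: making sure the transitive-closure chain is handled cleanly and that I invoke the symmetry of $\sim_B$ correctly when reversing $g_A(\nu) \sim_B \nu$. There is no genuine analytic difficulty here, since the real content is supplied by Proposition~\ref{prop:commutation}; the work is to recognize that an elementary convergence link is precisely a pair of $\cT_A$-images agreeing on a point, and that secrecy under commutation trivializes each such image back to its argument modulo $\sim_B$. Once $[\rho]_{\cancel A} \subseteq [\rho]_B$ is secured for all $\rho$, the existence of the coarse-graining relation $\sim_{\cancel A \to B}$ with $\Omega_B \cong (\Omega/\sim_{\cancel A})/\sim_{\cancel A \to B}$ follows directly from Proposition~\ref{prop:nested_agents}, identifying $\sim_{\cancel A}$ with the role of the less-restricted agent $B$ and $\sim_B$ with the more-restricted agent $C$ therein.
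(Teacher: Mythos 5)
Your overall route --- reduce $\sigma \sim_{\cancel A} \rho$ to single links of $\sim'_{\cancel A}$, show each link implies $\sim_B$, chain by transitivity, then finish with Proposition~\ref{prop:nested_agents} --- is exactly the structure of the paper's proof. However, the justification of your key single-link fact is logically backwards. Proposition~\ref{prop:commutation} states that, for commuting agents, the simple condition $g_A(\rho) \sim_B \rho$ (for all $g_A \in \cT_A$, $\rho \in \Omega$) \emph{implies} secrecy of $A$ towards $B$; its final clause is an implication \emph{into} secrecy, not out of it. So writing that Proposition~\ref{prop:commutation} ``yields $h_A(\lambda) \sim_B \lambda$'' from the secrecy hypothesis invokes the converse of that proposition, which it does not assert, and as stated this step has no support.

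The repair is immediate and is what the paper does: since $\cT_B$ is a monoid, $\mathrm{id} \in \cT_B$, so setting $f_B = \mathrm{id}$ in Definition~\ref{def:secrecy} gives $g_A(\rho) \sim_B \rho$ directly from the secrecy hypothesis alone. Notice that once this substitution is made, the commutation hypothesis plays no role in establishing the inclusion $[\rho]_{\cancel A} \subseteq [\rho]_B$ --- the paper's proof never uses it either. With that one change your argument coincides with the paper's: for a single link with witnesses $f_A(\mu) = g_A(\nu)$, one has $\mu \sim_B f_A(\mu) = g_A(\nu) \sim_B \nu$, the transitive closure in Definition~\ref{def:induced_perspective} is absorbed by transitivity of $\sim_B$, and the quotient statement $\Omega_B \cong (\Omega/\!\sim_{\cancel A})/\!\sim_{\cancel A \to B}$ follows from Proposition~\ref{prop:nested_agents} exactly as you indicate.
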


\begin{proof}
Since $\cT_A$ is secret towards $B$, 
$$ \rho \sim_B g_A (\rho) $$
and so, due to transitivity of $\sim_B$, also
$$ \exists \ f_A, g_A \in \cT_A \text{ s.t.\ } f_A (\rho) = g_A (\sigma) 
\implies \rho \sim_B \sigma . $$
Again due to transitivity it directly follows that
$$ \rho \sim_{\cancel A} \sigma \implies \rho \sim_B \sigma $$
and so 
$$ [\rho]_B \supseteq [\rho]_{\cancel A}  $$
for all $\rho\in\Omega$. 
We may thus employ Proposition~\ref{prop:nested_agents}.
\end{proof}

\begin{corollary}
Let $\cT_A,\cT_{B} \subseteq \cT$ be monoids such that 
$ \cT_A \subseteq \cT_{B} $. Then the induced equivalence relations
$ \sim_{\cancel A}$ and $ \sim_{\cancel{B}}$ satisfy 
$$ [\rho]_{\cancel{B}} \supseteq [\rho]_{\cancel A}, \ \forall \ \rho \in \Omega . $$
This again implies that there exists an equivalence relation $\sim_{\cancel A \to \cancel B}$ in the effective state space $\Omega / \sim_{\cancel A}$ such that 
$$ \Omega / \sim_{\cancel B} \  \cong (\Omega / \sim_{\cancel A}) / \sim_{\cancel A \to \cancel B} . $$
\end{corollary}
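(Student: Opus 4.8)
The plan is to reduce this corollary to the Nested agents proposition (Proposition~\ref{prop:nested_agents}), exactly as was done for Proposition~\ref{prop:minimal_perspective}. The whole content is the inclusion $[\rho]_{\cancel B} \supseteq [\rho]_{\cancel A}$; once that is in hand, the second claim is an immediate application of the implication $1 \to 2$ in Proposition~\ref{prop:nested_agents}, reading $\cancel A$ in the role of the less restricted agent $B$ and $\cancel B$ in the role of the more restricted agent $C$.

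To establish the inclusion, I would first compare the two base (pre-closure) relations of Definition~\ref{def:induced_perspective}. If $\rho \sim'_{\cancel A} \sigma$, then by definition there exist $f_A, g_A \in \cT_A$ with $f_A(\rho) = g_A(\sigma)$. Since $\cT_A \subseteq \cT_B$ by hypothesis, these same two maps lie in $\cT_B$ and witness $\rho \sim'_{\cancel B} \sigma$. Hence $\sim'_{\cancel A}$ is contained in $\sim'_{\cancel B}$ as a relation on $\Omega$.

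The second step is to push this containment through the transitive closure. A chain $\rho \sim'_{\cancel A} \tau_1 \sim'_{\cancel A} \dots \sim'_{\cancel A} \sigma$ witnessing $\rho \sim_{\cancel A} \sigma$ is, link by link, also a chain of $\sim'_{\cancel B}$ relations by the previous step, so it witnesses $\rho \sim_{\cancel B} \sigma$. Therefore $\sim_{\cancel A}$ is contained in $\sim_{\cancel B}$, which is precisely $[\rho]_{\cancel A} \subseteq [\rho]_{\cancel B}$, i.e.\ $[\rho]_{\cancel B} \supseteq [\rho]_{\cancel A}$, for every $\rho \in \Omega$.

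I do not expect any genuine obstacle here, since this is a direct corollary; the only point requiring a moment of care is the monotonicity of the transitive-closure operation under inclusion of relations, which is the elementary observation that a chain built from the smaller relation is automatically a chain for the larger one. With the inclusion of equivalence classes established, applying Proposition~\ref{prop:nested_agents} yields the desired equivalence relation $\sim_{\cancel A \to \cancel B}$ on $\Omega/\sim_{\cancel A}$ together with the isomorphism $\Omega/\sim_{\cancel B} \cong (\Omega/\sim_{\cancel A})/\sim_{\cancel A \to \cancel B}$.
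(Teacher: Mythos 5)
Your proof is correct, but it takes a different route from the paper's. The paper obtains the inclusion $[\rho]_{\cancel B} \supseteq [\rho]_{\cancel A}$ by observing that $\cT_A$ is secret towards the perspective $\sim_{\cancel B}$ (indeed $g_A \in \cT_A \subseteq \cT_B$ gives $g_A(\rho) \sim'_{\cancel B} \rho$, witnessed by the pair $g_A$ and $\mathrm{id}$) and then invoking Proposition~\ref{prop:minimal_perspective}; you instead prove the inclusion directly at the level of the defining relations, noting that $\cT_A \subseteq \cT_B$ turns every witness pair for $\sim'_{\cancel A}$ into a witness pair for $\sim'_{\cancel B}$, and that transitive closure is monotone under inclusion of relations. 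Both arguments finish identically, via the $1 \to 2$ direction of Proposition~\ref{prop:nested_agents}. Your direct route is arguably cleaner in one respect: Proposition~\ref{prop:minimal_perspective} as stated carries the hypothesis that $\cT_A$ and $\cT_B$ commute, which is not granted by $\cT_A \subseteq \cT_B$ (nested submonoids need not commute with each other), so the paper's citation goes through only because the inclusion part of that proposition's proof never actually uses commutation --- it only needs $g_A(\rho) \sim_B \rho$ together with transitivity of $\sim_B$. Your argument sidesteps this subtlety entirely, at the modest cost of re-deriving a few lines the paper reuses; the paper's route, in exchange, is shorter and exhibits the corollary as an instance of the general minimality principle for induced perspectives.
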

\begin{proof}
This follows from the fact that $\cT_A$ is secret towards $\sim_{\cancel B}$, together with Proposition~\ref{prop:minimal_perspective}. The second statement follows again from Proposition~\ref{prop:nested_agents}.
\end{proof}

Finally, the following proposition shows that equivalence classes $[\rho]_{\cancel A}$ are preserved by commuting transformations $\cT_B$,
providing an operational interpretation to the perspective of agents $(\sim_{\cancel A},\cT_B)$: 
namely, states that are indistinguishable from Bob's point of view remain indistinguishable after he applies functions $f_B \in \cT_B$,
$$ \omega \sim_{\cancel A} \rho \implies f_B (\omega) \sim_{\cancel A} f_B (\rho) . $$

\begin{proposition}[Induced perspective is operational]
\label{prop:local_functions_preserved}
Let $\cT_A,\cT_B \subseteq \cT$ be commuting transformations. Then the perspective $ \sim_{\cancel A} $ induced by $\cT_A$ satisfies
$$ \omega \sim_{\cancel A} \rho \implies f_{B} (\omega) \sim_{\cancel A} f_{B} (\rho) ,$$
for any $f_{B} \in \cT_B$ and $\omega, \rho \in \Omega$.
\end{proposition}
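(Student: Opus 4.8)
The plan is to exploit the fact that $\sim_{\cancel A}$ is built as the transitive closure of the primitive relation $\sim'_{\cancel A}$ (Definition~\ref{def:induced_perspective}). It therefore suffices to establish the implication first at the level of the single-step relation $\sim'_{\cancel A}$ and then propagate it along connecting chains. Concretely, I would first show that every $f_B \in \cT_B$ maps $\sim'_{\cancel A}$-related pairs to $\sim'_{\cancel A}$-related pairs, and then lift this to $\sim_{\cancel A}$ by applying the single-step result to each link of a finite chain and invoking transitivity.

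For the base step, suppose $\omega \sim'_{\cancel A} \rho$. By Definition~\ref{def:induced_perspective} there exist $f_A, g_A \in \cT_A$ with $f_A(\omega) = g_A(\rho)$. Applying $f_B$ to both sides and using the commutation hypothesis $f_B \circ h_A = h_A \circ f_B$, which holds for every $h_A \in \cT_A$ by Definition~\ref{def:commuting_agents}, I obtain
$$ f_A \circ f_B (\omega) = f_B \circ f_A (\omega) = f_B \circ g_A (\rho) = g_A \circ f_B (\rho). $$
This is precisely the witness needed, with the same $f_A, g_A \in \cT_A$, to conclude $f_B(\omega) \sim'_{\cancel A} f_B(\rho)$. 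Hence $f_B$ preserves the primitive relation.

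For the closure step, suppose $\omega \sim_{\cancel A} \rho$, so there is a finite chain $\omega \sim'_{\cancel A} \tau_1 \sim'_{\cancel A} \cdots \sim'_{\cancel A} \tau_n \sim'_{\cancel A} \rho$. Applying the base step to each consecutive pair yields $f_B(\omega) \sim'_{\cancel A} f_B(\tau_1) \sim'_{\cancel A} \cdots \sim'_{\cancel A} f_B(\tau_n) \sim'_{\cancel A} f_B(\rho)$, and transitivity of $\sim_{\cancel A}$ then gives $f_B(\omega) \sim_{\cancel A} f_B(\rho)$, as desired.

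I do not expect a serious obstacle here. The only delicate point is that commutation must be invoked for every generator $h_A \in \cT_A$ that appears in the witnesses, rather than for a single fixed transformation; but this is guaranteed precisely because Definition~\ref{def:commuting_agents} demands commutation of all of $\cT_A$ with all of $\cT_B$. Once the single-step preservation is established, the transitive-closure bookkeeping is routine.
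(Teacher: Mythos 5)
Your proposal is correct and follows essentially the same route as the paper's own proof: first showing, via the commutation of $\cT_A$ and $\cT_B$, that any $f_B \in \cT_B$ preserves the single-step relation $\sim'_{\cancel A}$ (with the same witnesses $f_A, g_A$), and then lifting this to $\sim_{\cancel A}$ link by link along the chain in the transitive closure. No gaps; your remark that commutation is needed for every element of $\cT_A$, not just a fixed one, is exactly the point the paper's definition of commuting agents supplies.
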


\begin{proof}
From the definition of $ \sim_{\cancel A} $ it follows that
\begin{align*} 
\omega \sim_{\cancel A} \rho \implies &\exists \ n\in \mathbb N, \{\tau_i\}_{1 \leq i\leq n} \text{ with } \tau_i \in \Omega : \\
&\rho \sim'_{\cancel A} \tau_1 \sim'_{\cancel A} \tau_2 \sim'_{\cancel A} \dots \sim_{\cancel A} \tau_n \sim'_{\cancel A} \sigma 
\end{align*}
with
$$ \nu \sim'_{\cancel A} \omega \iff \exists \ f_A,g_A \in \cT_A \text{ s.t.\ } f_A(\nu) = g_A (\omega) . $$
But then, because $\cT_A$ and $\cT_B$  commute,
\begin{align*} 
&\nu \sim'_{\cancel A} \omega \\
\iff 
&\exists \ f_{A}, g_{A} \in \cT_{A} \  \text{ s.t.\ } 
f_{A} ( \nu  ) = g_{A} ( \omega ) \\
\implies 
&\exists \ f_{A}, g_{A} \in \cT_{A} \  \text{ s.t.\ } 
f_{B} \circ  f_{A} ( \nu ) = f_{B} \circ g_{A} ( \omega ) \\
\iff 
&\exists \ f_{A}, g_{A} \in \cT_{A} \  \text{ s.t.\ }
f_{A} \circ f_{B} ( \nu ) = g_{A} \circ f_{B} ( \omega ) \\
\implies 
&f_{B} ( \nu ) \sim'_{\cancel A} f_{B} (\omega ) 
\end{align*}
for all $f_{B} \in \cT_B$. 
From the definition of $\sim_{\cancel A}$ as the transitive closure of $\sim'_{\cancel A}$, it then also follows that
$$ \rho \sim_{\cancel A} \sigma \implies f_B (\rho ) \sim_{\cancel A} f_B (\sigma ) . $$

\end{proof}

\onecolumngrid

\section{Application to GPTs}
\label{appendix:GPT}

Generalized probability theories (GPTs~\cite{Barrett2007,Hardy2001,Wootters1986,Mana2003,Mana2004})  are a framework to infer as much as possible about a physical system without making assumptions about its inner workings (like the assumption that states can be represented as vectors in a Hilbert space). Instead, it is assumed that agents can implement and label a number of physical procedures, like preparations, transformations and, crucially, measurements. Note that the agents  need not know the actual physical state prepared; in order to label a procedure,  they only need to be confident that they can repeat it. 
Indeed, the basic assumption behind GPT frameworks is that agents can extract significant measurement statistics (for example, by repeating a procedure many times). Hence, GPTs model outputs of measurements as random variables, and agents'  knowledge of procedures  as probability distributions.

The usual approach to build GPTs is bottom-up, starting with local procedures that can be composed to reach a global theory.  
Here we are interested in the opposite direction: given a global GPT, can we find meaningful notions of local variables? 
Firstly, we need to model global and local knowledge.

\subsection{Basic formalism}
While we are inspired by known GPT models~\cite{Barrett2007,Hardy2001,Wootters1986,Mana2003,Mana2004}, we take a slightly different and simplified approach here. 
The idea is that agents only have direct access to classical random variables (like input settings and outputs  of a physical measurement). Since they correspond to accessible information, we denote probability distributions over these random variables by \emph{states}. 
\emph{Transformations} $f$ are naturally modelled by conditional probability distributions $P^f_{Z|X}$ that take input to output states, such that 
$f( P_X ) = P'_{Z}$, with   
\begin{align*} 
P'_Z(Z) &= \sum_{x\in X} P^f_{Z|X}(z|x)\  P_X(x).
\end{align*}
For example, suppose that we want to model an experiment where an agent performs a quantum measurement by pressing two buttons: button $X$ prepares a quantum state $\rho^x$  and button $Y$  measures it according to the POVM  $\{ E^y_z \}_z$ with possible outcomes $\{z\}_{z\in Z}$.
The distributions $P_{XY}$ over inputs and $P'_Z$ over outputs correspond to accessible ``states.''  
We model the transformation as a conditional distribution $ P^f_{Z| XY } $ with  $ P^f_{Z|XY }(z|x,y) = \tr (E^y_z\ \rho_x) $. 
The final distribution $P'_Z$ of outcomes given an input distribution $P_{XY}$ is therefore 
\begin{align*}
     P'_Z(z) 
     &= \sum_{x\in X}\sum_{y \in Y} P^f_{Z|XY }(z|x,y)  \ P_{XY}(x,y) \\
     &= \sum_{x\in X} \sum_{y \in Y}  \tr (E^y_z\  \rho_x)  \ P_{XY}(x,y) .
\end{align*}
Note that it is the conditional distribution that encodes the ``physical'' information about a particular setting (like the quantum state and POVM), which may be inaccessible to the agents.  

To compare with the models of~\cite{Hardy2001,Barrett2007}, our 
transformations are analogous to their states. However, in our agent-driven approach, we restrict the set of allowed measurements to those  accessible to a particular agent in the resource theory --- they can thus be seen as a subset of the \emph{fiducial measurements} that define a state in~\cite{Hardy2001,Barrett2007}.

In our model, \emph{global states} correspond to distributions over a global random variable $X$.
Restricted agents are those unable to distinguish some of the outcomes of the global variable. We can model this via arbitrary groupings of outcomes $x \in X$ into equivalence classes, i.e.\ \emph{events} $\{B_b\}_{b}$. 
The \emph{reduction} function $\h_B$ to the effective state space of an agent $B$ simply sums over all the probabilities of the individual outcomes $x\in B_b$ in each event $B_b$ and returns the probability associated with the event, 
\begin{align*}
    P_B = \h_B (P_{X}),\ \text{ with }\ P_B(b) &= \sum_{x\in B_b} P_X(x).
\end{align*}

\subsection{Secrecy and non-signalling}

\begin{figure}[t]
\centering
\includegraphics{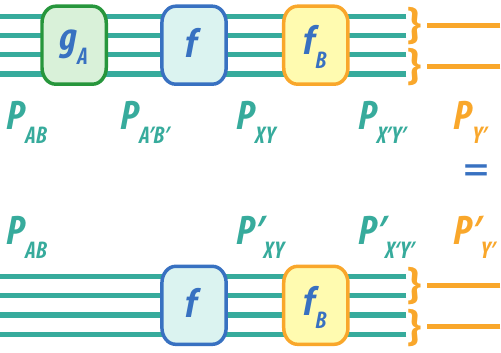}
\caption{ {\bf Extended secrecy in GPTs.}  Consider two agents Alice and Bob and a two-bit random variable $AB$ (the four possible outcomes are represented by vertical lines). Our global state space consists of distributions over two bits (whose names we change after each step to make a proof more readable). 
Let the reduction to the state space of Bob be a coarse-graining over the first bit, $P_Y (y) = P_{XY}(0,y)+P_{XY}(1,y)$, as shown on top. Secrecy in the presence of a function $f$ (Eq.~\ref{eq:extended_gpt}) corresponds to $P_{Y'} = P'_{Y'}$ in the diagram.
This is satisfied, for example if $f$ represents to the use of a PR-box, while $g_A$ and $f_B$ correspond to choices of inputs and post-processing: Bob cannot guess Alice's choice of input after the use of a PR-box. This is equivalent to the traditional notion of non-signalling~~\cite{Colbeck2016} applied to the PR-box. 
}
\label{fig:gpt}
\end{figure}

Consider now two agents $A$ and $B$ whose actions commute. In order to guarantee secrecy of $A$ towards $B$, we only need to satisfy the independence condition $ \green{g_A} (P_X) \sim_B P_X$  (for all global $P_X$ and all $\green{g_A}\in \cT_A$, see Proposition~\ref{prop:commutation}). In our language, this condition reads 
\begin{align*} 
\sum_{y\in B_b} \sum_{x\in X} \green{ P^{g_A}_{Y \vert X}(y \vert x)}\  P_{X}(x) 
&= \sum_{x\in B_b} P_X(x) , \quad \forall\ b.
\end{align*}
For simplicity, we took $Y$ and $X$ to be identical random variables that represent the global state before and after the transformation, and $\h_B$ is a particular coarse-graining of outcomes into events. The condition then states that $\h_B$ is insensitive to the transformation (conditional probability distribution) $P^{g_A}_{Y\vert X}$ from inputs $x\in X$ to outcomes $y\in Y$.

If $A$ and $B$ commute and are mutually secret, we can ask if an additional global transformation $f \in \cT$ allows for signalling between them  (Definition~\ref{def:extended_secrecy}). Our condition for extended secrecy in the presence of $f$,   
$\orange{f_B } \circ \blue{ f} \circ \green{g_A} ( P_X ) \sim_B  \orange{f_B} \circ \blue{ f} ( P_X ),$
for all global $P_X$, $\orange{f_B }\in \cT_B$ and $\green{g_A} \in \cT_A$, becomes
\begin{align}
    \sum_{v\in B_b} \sum_{x,y,z} \orange{ P^{f_B}_{V \vert Z} (v \vert z) }\ \blue{P^f_{Z|Y}(z|y)}\ \green{ P^{g_A}_{Y|X}(y|x)}\ P_X(x)  
    =& \sum_{v\in B_b} \sum_{x,z} \orange{ P^{f_B}_{V\vert Z} (v\vert z) }\  \blue{P^f_{Z|X}(z|x)}\ P_X(x) , \quad \forall \ b.
    \label{eq:extended_gpt}
\end{align}
Non-signalling functions are then those that do not let information encoded in $P^A_{Y|X}$ propagate to Bob's perspective $\sim_B$, that is, such that $\cT_A$ is secret with respect to $(\sim_B,\cT_B)$ in the presence of $f$. Here, again for simplicity $X,Y,Z,V$ were chosen as identical random variables, and $P^f_{Z|X}=P^f_{Z \vert Y}$ both represent the conditional probability distribution corresponding to $f$. For a simple example in a 2-bit space, see Figure~\ref{fig:gpt}.

We can compare our definition of secrecy in the presence of $f$ to traditional notions of non-signalling. Consider again the simple case of a two-bit input and output space of Figure~\ref{fig:gpt}. The definition of non-signalling found for example in~\cite{Colbeck2016} reads $P_{Y|AB}= P_{Y|B}$, or
\begin{align}
    \sum_{x=0,1} P_{XY|AB}(xy|a=0,b) 
    &= \sum_{x=0,1} P_{XY|AB}(xy|a=1,b) , \quad
    \forall \ b, y \in \{0,1\}
   \label{eq:traditional}
\end{align}
This condition formalizes the idea that Bob cannot learn anything about Alice's input $a$ by looking solely at his output $y$ and input $b$. 
In our framework, Alice's choice of input is encoded in a local transformation $g_A \in \cT_A$ (for example, $g_A^0$ could correspond to pressing a button to choose input $0$ and $g_A^1$ to choose $1$), and therefore ``Bob's ignorance about Alice's 
input'' translates to ``Bob's ignorance about Alice's action $g_A$.''

In the following we establish a direct equivalence between these two notions of non-signalling in this simple case; we expect this equivalence to hold in more general settings. Let us first flesh out the assumptions behind the equivalence. A gentle warning: we have labelled all the intermediate bits differently ``to avoid confusion'' (Figure~\ref{fig:gpt}). 
Since we assume a priori that Alice and Bob have  mutual secrecy (without $f$), we take that $g_A$ only acts locally on Alice's bit,  $$P^{g_A}_{A'B'|AB }(a',b'|a,b) = P^{g_A}_{A'|A}(a'|a) \ \delta(b',b), $$
so that 
$$g_A(Q_{AB}(a,b)) = \sum_{a,b} P^{g_A}_{A'|A}(a'|a) \ \delta(b',b)\ Q_{AB}(a,b)=   \sum_a P^{g_A}_{A'|A} (a'|a)\ Q_{AB}(a,b).$$
Similarly, Bob's post-processing is encoded in $f_B \in \cT_B$ which we also assume to be truly local, that is 
$$P^{f_B}_{X'Y'|XY} (x',y'|x,y) = P^{f_B}_{Y'|Y}(y'|y) \ \delta(x',x).$$
The final distribution $P_{Y'}$ for Bob becomes
\begin{align*}
    P_{Y'}(y') 
    &= \orange{\h_B }\circ \orange{ f_B} \circ \blue{ f }\circ \green{g_A} (Q_{AB}(a,b)) \\
    &= \green{\sum_{a}} \ \orange{\h_B}  \circ\orange{ f_B }\circ \blue{ f} ( \green{P^{g_A}_{A'|A} (a'|a)}\ Q_{AB}(a,b)) \\
    &= \green{\sum_{a}} \blue{ \sum_{a',b}}\ \orange{\h_B}  \circ\orange{ f_B } (\blue{P^f_{XY|A'B'}(x,y|a',b) } \  \green{P^{g_A}_{A'|A} (a'|a)}\ Q_{AB}(a,b)) \\
    &= \green{\sum_{a}} \blue{\sum_{a',b}} \orange{ \sum_y }\ \orange{\h_B}  ( \orange{ P^{f_B}_{Y'\vert Y} (y'\vert y)}
    \ \blue{P^f_{XY|A'B'}(x,y|a',b) } \  \green{P^{g_A}_{A'|A} (a'|a)}\ Q_{AB}(a,b)) \\
    &= \green{\sum_{a}} \blue{\sum_{a',b}} \orange{ \sum_y }\orange{ \sum_x }\ \orange{ P^{f_B}_{Y'\vert Y} (y'\vert y)}
    \ \blue{P^f_{XY|A'B'}(x,y|a',b) } \  \green{P^{g_A}_{A'|A} (a'|a)}\ Q_{AB}(a,b).
\end{align*}
The condition for secrecy in the presence of $f$, Eq.~\ref{eq:extended_gpt}, is then 
\begin{align}
    &  \sum_{a,a',b,x,y} 
   \orange{ P^{f_B}_{Y'\vert Y} (y'\vert y)} \ \blue{P^f_{XY|A'B'}(x,y|a',b) }\ \green{P^{g_A}_{A'|A}(a'|a)} \  Q_{AB}(a,b)
   \nonumber \\
    =&  \sum_{a,b,x,y}
    \orange{P^{f_B}_{Y' \vert Y} (y' \vert y)}\ \blue{P^f_{XY|AB}(x,y|a,b)} \ Q_{AB}(a,b),
    \nonumber\\
    &
    \qquad \qquad\qquad \qquad
    \forall \ y'\in \{0,1\}, g_A \in \cT_A, f_B \in \cT_B, Q_{AB}  \in \Omega .
    \label{eq:our_ns_gpt}
\end{align}

\begin{proposition}[Equivalence to non-signalling in GPTs]
In the setting of Figure~\ref{fig:gpt}, ``our'' condition of non-signalling, Eq.~\ref{eq:our_ns_gpt}, is equivalent to the ``traditional'' notion, Eq.~\ref{eq:traditional}.
\end{proposition}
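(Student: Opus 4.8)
The plan is to prove the two implications separately, treating Eq.~\ref{eq:traditional} $\Rightarrow$ Eq.~\ref{eq:our_ns_gpt} as the routine direction and the converse as the substantive step. Throughout I would write $P^f_{Y|AB}(y|a,b) := \sum_x P^f_{XY|AB}(x,y|a,b)$ for Bob's output marginal of the box $f$ (identifying the primed and unprimed wire labels, since they name the same box), so that the traditional condition Eq.~\ref{eq:traditional} reads simply as $a$-independence: $P^f_{Y|AB}(y|a,b) = P^f_{Y|AB}(y|a',b)$ for all $a,a',b,y$.

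For Eq.~\ref{eq:traditional} $\Rightarrow$ Eq.~\ref{eq:our_ns_gpt}, I would start from the left-hand side of Eq.~\ref{eq:our_ns_gpt} and carry out the $x$-sum first, collapsing $P^f_{XY|A'B'}$ into the marginal $P^f_{Y|AB}(y|a',b)$. Traditional non-signalling makes this marginal independent of $a'$, after which the remaining sum over $a'$ acts only on $P^{g_A}_{A'|A}(a'|a)$ and yields $1$ by normalization of the conditional distribution. Performing the same $x$-sum on the right-hand side and again invoking $a$-independence produces the identical expression, so the two sides coincide. The only ingredients are normalization of $P^{g_A}_{A'|A}$ and the assumed $a$-independence of the marginal.

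The converse, Eq.~\ref{eq:our_ns_gpt} $\Rightarrow$ Eq.~\ref{eq:traditional}, is the key step, and proceeds by specializing the objects that Eq.~\ref{eq:our_ns_gpt} quantifies over. First I would take $f_B = \text{id}$, i.e.\ $P^{f_B}_{Y'|Y}(y'|y) = \delta(y',y)$, which removes Bob's post-processing and, after the $x$-sum, reduces Eq.~\ref{eq:our_ns_gpt} to $\sum_{a,a',b} P^f_{Y|AB}(y'|a',b)\, P^{g_A}_{A'|A}(a'|a)\, Q_{AB}(a,b) = \sum_{a,b} P^f_{Y|AB}(y'|a,b)\, Q_{AB}(a,b)$. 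Since this must hold for every state, I would then take $Q_{AB}$ to be the point mass at an arbitrary $(a_0,b_0)$, stripping the outer sum. Finally I would take $g_A$ to be the constant ``choice of input'' function that sets Alice's bit to a fixed $a_1$, so that $P^{g_A}_{A'|A}(a'|a_0) = \delta(a',a_1)$; this collapses the remaining $a'$-sum and leaves exactly $P^f_{Y|AB}(y'|a_1,b_0) = P^f_{Y|AB}(y'|a_0,b_0)$. As $a_0,a_1,b_0,y'$ are arbitrary, this is precisely the $a$-independence asserted by Eq.~\ref{eq:traditional}.

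The main obstacle I anticipate is not the algebra but justifying that the specializing choices genuinely live in the relevant sets. The argument needs the identity in $\cT_B$ (immediate, as $\cT_B$ is a monoid), the deterministic point-mass distributions in $\Omega$ (available, since $\Omega$ is the set of all distributions over two bits), and — most importantly — the constant input-setting functions $g_A^0, g_A^1$ in $\cT_A$. These are exactly the button-press operations singled out in the setup of Figure~\ref{fig:gpt}, so I would make explicit that the equivalence is contingent on $\cT_A$ containing such input-choice functions; without them one could only derive $a$-independence averaged against the available $g_A$, which is strictly weaker than Eq.~\ref{eq:traditional}.
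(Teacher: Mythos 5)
Your proposal is correct and takes essentially the same route as the paper: the substantive direction specializes to $P^{f_B}_{Y'\vert Y}(y'\vert y)=\delta(y',y)$ and the deterministic input-setting distributions $P^{g_A}_{A'|A}(a'|a)=\delta(a',a_1)$ (the paper's $g^0_A,g^1_A$), while the routine direction uses $a$-independence of Bob's marginal together with normalization $\sum_{a'}P^{g_A}_{A'|A}(a'|a)=1$, exactly as in the paper's computation. Your two explicit refinements --- performing the $x$-sum before invoking Eq.~\ref{eq:traditional} (the paper's display loosely writes the unsummed $P^f_{XY|B}(xy|b)$ as $a'$-independent) and flagging that the equivalence requires $\cT_A$ to contain the constant input-choice functions --- are minor tightenings of assumptions the paper makes implicitly in its setup, not departures from its proof.
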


\begin{proof}

In our language, the non-signalling condition of Eq.~\ref{eq:traditional} reads $$P^f_{XY|A'B}(xy|b,a'=0) = P^f_{XY|A'B}(xy|b,a'=1) =: P^f_{XY|A'B}(xy|b) .$$
To show that Eq.~\ref{eq:our_ns_gpt} implies the above, we choose the particular local actions
\begin{align*}
    P^{g^0_A}_{A'|A} &= \delta(a',0),\qquad
    P^{g^1_A}_{A'|A} = \delta(a',1),\qquad
    P^{f_B}_{Y'|Y} (y'\vert y) = \delta(y', y).
\end{align*}
There, $g^0_A$ corresponds to Alice's choice of input $0$, $g^1_A$ to her choice of $1$, and $f_B$ to no post-processing by Bob. 
These choices directly imply for all $Q_{AB}$, 
\begin{align*}
    \sum_{x} P^f_{XY|AB}(xy|b,a=0)\ Q_{B}(b)
    &= \sum_{x} P^f_{XY|AB}(xy|b,a=1)\ Q_{B}(b),
\end{align*}
and so  traditional non-signalling follows.
For the other direction, we have simply
\begin{align*}
    P_{Y'}(y') 
    &= \sum_{a,a',b,x,y\in \{0,1\}}
     \orange{ P^{f_B}_{Y'\vert Y} (y' \vert y)} 
     \  \blue{ P^f_{XY|A'B}(xy|a'b) }\ \green{P^{g_A}_{A'|A}(a'|a) }\ Q_{AB}(ab)  \\
    \flag{\text{non-signalling  (Eq.~\ref{eq:traditional})} } \qquad
    &= \sum_{a,b,x,y} \orange{ P^{f_B}_{Y'\vert Y} (y' \vert y)}  \ \blue{P^f_{XY|B}(xy|b) }
     \ \underbrace{\left[  \green{P^A_{A'|A}(0|a)}
    + \green{P^A_{A'|A}(1|a)}\right]}_{=1}
    \  Q_{AB}(a,b) \\ 
    &= \sum_{a,b,x,y} \orange{P^B_{Y'\vert Y} (y' \vert y) }
    \  \blue{P^f_{XY|AB}(xy|b) } \ Q_{AB}(a,b) .
\end{align*}

\end{proof}

This shows that for example PR boxes satisfy our definition of non-signalling functions. Examples for functions that are signalling are conditional probability distributions that swap the states on the two systems, or bitwise addition of the inputs $a$ and $b$ on the two sides into the outputs $x$ and $y$.

\section{Deriving secrecy without commutation}
\label{appendix:constructions}

In principle, the way we have constructed an induced perspective $\sim_{\cancel A}$ from a monoid $\cT_A$ can be extended to construct equivalence relations that yield secret agents 
$(\sim_{A},\cT_A)$ and $(\sim_{B}, \cT_B)$ 
in the presence of a global function $f$, and even in the case where functions $\cT_A$ and $\cT_B$ do not commute. 
This is done in the following proposition, which, as is shown in the subsequent corollary, reduces to the definition of induced perspectives $\sim_{\cancel A}$ and $\sim_{\cancel B}$ in the case $f = \text{id} $ and commuting $\cT_A,\cT_B$.

\begin{proposition}[Deriving secret agents]
\label{prop:deriving_secret_agents}
Let $(\Omega,\cT)$ be a gobal theory, $\cT_A^S,\cT_B \subseteq \cT$ be two monoids of transformations, 
and let $f \in \cT$. \\
Then the smallest equivalence class $\sim_B$ on $\Omega$ towards which $\cT_A^S$ is secret in the presence of $f$,
$$ f_B \circ f \circ f'_B \circ g_A (V) \sim_B  f_B \circ f \circ f'_B (V), $$ 
for all $V \in S^\Omega, g_A \in \cT_A^{S}, f_B \in \cT_B$, 
is built as follows: \\
Define the relation $\sim$ on $\Omega$ as
$$ \rho \sim \sigma \iff 
\exists \ f_A, g_A \in \cT_A^S \text{ s.t.\ } 
f_A( \rho ) = g_A ( \sigma ) . $$
Then define another relation $\sim'$ as
\begin{align*}
\rho \sim' \sigma \iff 
\begin{cases} 
\rho \sim \sigma \text{ or } \\
\exists \ \rho', \sigma' \in \Omega, f_B \in \cT_B \text{ s.t.\ } 
\rho  = f_B ( \rho' ),  \sigma = f_B ( \sigma' ),  \rho' \sim \sigma'  \text{ or } \\
\exists \ \rho', \sigma' \in \Omega, f_B, f'_B \in \cT_B \text{ s.t.\ } \rho  = f_B \circ f \circ f'_B ( \rho'  ),  \sigma = f_B \circ f \circ f'_B ( \sigma' ),  \rho' \sim \sigma'.
\end{cases} 
\end{align*}
Finally, the relation $\sim_B$ on $\Omega$ is the transitive closure of $\sim'$, namely through
$$ \rho \sim_B \sigma \iff \exists \ n \in \mathbb N, \tau_1, \dots, \tau_n \in \Omega 
\text{ s.t.\ } \rho \sim' \tau_1, \tau_1 \sim' \tau_2, \dots, \tau_n \sim' \sigma . $$
\end{proposition}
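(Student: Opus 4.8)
The plan is to establish two things: first, that $\sim_B$ is a genuine equivalence relation towards which $\cT_A^S$ is secret in the presence of $f$; and second, that it is the \emph{smallest} such relation, where ``smallest'' means finest as a partition of $\Omega$, i.e.\ smallest as a subset of $\Omega\times\Omega$, corresponding to the most generous effective state space $\Omega_B=\Omega/\!\sim_B$. Throughout I read ``secret in the presence of $f$'' in the sense of Definition~\ref{def:extended_secrecy}, which presupposes the plain secrecy of Definition~\ref{def:secrecy}; thus a competitor relation must satisfy both conditions.

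For the first part, I would check that $\sim_B$ is an equivalence relation. Reflexivity and symmetry are inherited from $\sim'$: the base relation $\sim$ is reflexive (take $f_A=g_A=\text{id}$, using $\text{id}\in\cT_A^S$) and symmetric (swap the roles of $f_A,g_A$), and each of the three clauses defining $\sim'$ is manifestly symmetric; transitivity then holds because $\sim_B$ is defined as the transitive closure of $\sim'$. To verify secrecy, I would observe that $g_A(V)\sim V$ for every $g_A\in\cT_A^S$ and $V\in\Omega$ (take $f_A=\text{id}$ and $g_A$ itself, so that $\text{id}(g_A(V))=g_A(V)$). Feeding $\rho'=g_A(V)$, $\sigma'=V$ into the third clause of $\sim'$ with the sandwich $f_B\circ f\circ f'_B$ gives $f_B\circ f\circ f'_B\circ g_A(V)\sim' f_B\circ f\circ f'_B(V)$, hence $\sim_B$; the second clause likewise yields the plain secrecy $f_B\circ g_A(V)\sim_B f_B(V)$.

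For minimality, I would take an arbitrary equivalence relation $\approx$ towards which $\cT_A^S$ is secret in the presence of $f$ and show $\sim_B\subseteq\approx$. Since $\approx$ is transitive and $\sim_B$ is the transitive closure of $\sim'$, it suffices to prove $\rho\sim'\sigma\implies\rho\approx\sigma$, which I would do clause by clause. In each clause the hypothesis $\rho'\sim\sigma'$ unpacks to $f_A(\rho')=g_A(\sigma')$ for some $f_A,g_A\in\cT_A^S$. For the first clause, plain secrecy with $f_B=\text{id}$ gives $\rho\approx f_A(\rho)=g_A(\sigma)\approx\sigma$. For the second, applying $f_B$ and using plain secrecy yields $f_B(\rho')\approx f_B(f_A(\rho'))=f_B(g_A(\sigma'))\approx f_B(\sigma')$. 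For the third, the extended secrecy of $\approx$ gives $f_B\circ f\circ f'_B(\rho')\approx f_B\circ f\circ f'_B(f_A(\rho'))=f_B\circ f\circ f'_B(g_A(\sigma'))\approx f_B\circ f\circ f'_B(\sigma')$. In every case transitivity of $\approx$ closes the chain, so $\sim'\subseteq\approx$ and therefore $\sim_B\subseteq\approx$.

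I expect the main obstacle to be conceptual rather than computational: getting the direction of ``smallest'' right (finest partition, not coarsest) and recognising that the three clauses of $\sim'$ are tailored exactly to the three independence conditions a secret-in-the-presence-of-$f$ relation must satisfy, so that the minimality argument is forced to invoke both the plain and the extended secrecy of the competitor $\approx$. Once that bookkeeping is in place, each of the three clause-checks is a one-line substitution exploiting the equality $f_A(\rho')=g_A(\sigma')$, and the equivalence-relation verification is routine.
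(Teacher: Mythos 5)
Your proof is correct and follows essentially the same route as the paper's: establish that $\sim_B$ is an equivalence relation by construction (reflexivity and symmetry of $\sim$ and $\sim'$, transitivity from the closure), and read off both the plain and extended secrecy conditions by feeding $g_A(V)\sim V$ (via $\text{id}\in\cT_A^S$) into the second and third clauses of $\sim'$. Your explicit clause-by-clause minimality argument --- showing $\sim'\subseteq\,\approx$ for any competitor equivalence relation $\approx$ satisfying both plain and extended secrecy, hence $\sim_B\subseteq\,\approx$ by transitivity --- merely fills in what the paper asserts ``by construction'', with the correct reading of ``smallest'' as the finest relation.
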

\begin{proof}
Both the relation $\sim$ and $\sim'$ are by construction reflexive and symmetric. The relation $\sim_B$ is then by construction also transitive, and thus constitutes an equivalence relation. The relation $\sim_B$ furthermore gives rise to the smallest perspective towards which $\cT_B$ is secret in the presence of $f$: 
$ \omega = g_A ( \rho ) \implies \rho \sim_B \omega .$
By construction then also
$ f_B \circ f \circ f'_B \circ g_A ( \omega ) \sim_B  f_B \circ f \circ f'_B ( \omega ) $
and 
$ f_B \circ g_A ( \omega ) \sim_B f_B ( \omega ) $,
for all $\omega \in \Omega, f_B,f'_B \in \cT_B, g_A \in \cT_A^S$.
\end{proof}

\begin{corollary}
In the case of commuting $\cT_A^S,\cT_B\subseteq \cT$, the equivalence relation $\sim_B$ constructed in Proposition~\ref{prop:deriving_secret_agents} that gives rise to secrecy in the presence of $f$ simplifies accordingly and can be constructed as follows. \\
Define the relation $\sim$ on $\Omega$ as
$$ \rho \sim \sigma \iff \exists \ f_A, g_A \in \cT_A^S \text{ s.t.\ } 
f_A ( \rho ) = g_A ( \sigma ) . $$
Then define another relation $\sim'$ as
\begin{align*}
\rho \sim' \sigma \iff 
\begin{cases} 
\rho\sim \sigma \text{ or } \\
\exists \ \rho', \sigma' \in \Omega, f_B \in \cT_B \text{ s.t.\ } 
\rho = f_B \circ f ( \rho' ),  \sigma = f_B \circ f ( \sigma' ),  \rho' \sim \sigma' .
\end{cases} 
\end{align*}
Then, the relation $\sim_B$ on $\Omega$ is the transitive closure of $\sim'$, namely through
$$ \rho \sim_B \sigma \iff 
\exists \ n \in \mathbb N, \tau_1, \dots, \tau_n \in \Omega 
\text{ s.t.\ } 
\rho \sim' \tau_1, \tau_1 \sim' \tau_2, \dots, \tau_n \sim' \sigma . $$
If in addition $f=\mathbb I$, the relation $\sim_B$ simplifies to
$$ \rho \sim_B \sigma \iff \exists \ n \in \mathbb N, \tau_1, \dots, \tau_n \in \Omega 
\text{ s.t.\ } 
\rho \sim  \tau_1, \tau_1 \sim \tau_2, \dots, \tau_n \sim \sigma $$
with $\sim$ as above. This recovers the construction of induced perspectives $\sim_{\cancel A}$ in Definition~\ref{def:induced_perspective}.
\end{corollary}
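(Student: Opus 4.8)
The plan is to show that the two-case relation $\sim'$ built in the corollary and the three-case relation $\sim'$ of Proposition~\ref{prop:deriving_secret_agents} have the same transitive closure, so that the two constructions yield identical equivalence relations $\sim_B$; the commutation hypothesis is precisely what makes the proposition's extra cases redundant. The engine behind everything is the single-step version of Proposition~\ref{prop:local_functions_preserved}: if $\cT_A^S$ and $\cT_B$ commute, then $\rho' \sim \sigma' \implies f_B(\rho') \sim f_B(\sigma')$ for every $f_B \in \cT_B$. Indeed, $\rho' \sim \sigma'$ means $f_A(\rho') = g_A(\sigma')$ for some $f_A, g_A \in \cT_A^S$; applying $f_B$ to both sides and commuting it past $f_A$ and $g_A$ yields $f_A(f_B(\rho')) = g_A(f_B(\sigma'))$, i.e.\ $f_B(\rho') \sim f_B(\sigma')$. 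I would record this as a short preliminary lemma.

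First I would prove that every proposition-step is a corollary-step. A single proposition-step falls into one of three cases. Case one ($\rho \sim \sigma$) is already a corollary-step. For case two ($\rho = f_B(\rho')$, $\sigma = f_B(\sigma')$ with $\rho' \sim \sigma'$), the lemma gives $\rho \sim \sigma$ directly, which is again a corollary-step. For case three ($\rho = f_B \circ f \circ f'_B(\rho')$, $\sigma = f_B \circ f \circ f'_B(\sigma')$ with $\rho' \sim \sigma'$), I apply the lemma to the inner function $f'_B$ to get $f'_B(\rho') \sim f'_B(\sigma')$, then relabel $\rho'' := f'_B(\rho')$ and $\sigma'' := f'_B(\sigma')$, so that $\rho = f_B \circ f(\rho'')$, $\sigma = f_B \circ f(\sigma'')$ with $\rho'' \sim \sigma''$, which is exactly the corollary's second case.

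Conversely, the corollary's first case ($\rho \sim \sigma$) is the proposition's first case, and its second case ($\rho = f_B \circ f(\rho')$, $\sigma = f_B \circ f(\sigma')$) is the proposition's third case with $f'_B = \text{id}$, a legitimate choice since $\text{id} \in \cT_B$ ($\cT_B$ being a monoid). As each single step of either relation is realized by a single step of the other, their transitive closures coincide, giving $\sim_B^{\text{Cor}} = \sim_B^{\text{Prop}}$. For the final claim, when $f = \text{id}$ the corollary's second case reads $\rho = f_B(\rho')$, $\sigma = f_B(\sigma')$ with $\rho' \sim \sigma'$, which by the lemma collapses to $\rho \sim \sigma$; hence $\sim'$ coincides with $\sim$ as a one-step relation, and its transitive closure is exactly the transitive closure of $\sim$, i.e.\ the relation $\sim_{\cancel A}$ of Definition~\ref{def:induced_perspective}.

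The only genuinely delicate point — the step I would guard most carefully — is the bookkeeping in case three: one must apply the lemma to the inner factor $f'_B$ rather than the outer $f_B$, and verify that the relabelling $\rho'' = f'_B(\rho')$ produces a pair still related by $\sim$, so that the composite legitimately reduces from $f_B \circ f \circ f'_B$ to $f_B \circ f$. Everything else is routine, since once the single steps are matched in both directions the equality of transitive closures is automatic.
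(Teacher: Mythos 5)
Your proof is correct and takes essentially the same route as the paper's: the paper's entire published argument consists of your preliminary lemma ($\rho \sim \sigma \implies f_B(\rho) \sim f_B(\sigma)$ for all $f_B \in \cT_B$, proved by exactly the same commutation computation), after which it simply asserts that ``the respective simplifications'' of $\sim_B$ and the recovery of $\sim_{\cancel A}$ follow. Your explicit step-by-step matching of the proposition's three-case relation against the corollary's two-case one --- including relabelling via the inner $f'_B$ and the converse choice $f'_B = \text{id}$, legitimate since $\cT_B$ is a monoid --- just fills in bookkeeping the paper leaves implicit.
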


\begin{proof}
In the case when functions in $\cT_A^S$ and $\cT_B$ commute, we can see that
\begin{align*} 
\rho \sim \sigma &\iff   \exists \ f_A, g_A \in \cT_A^S \text{ s.t.\ } 
f_A ( \rho ) = g_A ( \sigma ) \\
&\implies 
\exists \ f_A, g_A \in \cT_A^S \text{ s.t.\ } f_B \circ f_A( \rho ) = f_B \circ g_A ( \sigma ) \\
&\iff \exists \ f_A, g_A \in \cT_A^S \text{ s.t.\ }
f_A \circ f_B ( \rho ) =  g_A \circ f_B ( \sigma ) \\
&\iff f_B ( \rho ) \sim f_B ( \sigma ) 
\end{align*}
for all $f_B \in \cT_B$. This implies the respective simplifications of the relation $\sim_B$, and recovers the induced perspective $\sim_{\cancel A}$ of $\cT_A^S$ in the case of $f= \text{id} $.
\end{proof}


\vspace{10mm}

\end{document}